
\documentclass[letterpaper, 12pt, draftcls, onecolumn]{ieeeconf}


\IEEEoverridecommandlockouts                              

\overrideIEEEmargins                                      


\usepackage[dvipdfmx]{graphicx}
\usepackage{amsmath, amssymb, bm, mathrsfs}
\usepackage{cite, color, graphicx, subcaption}

\newcommand{\cl}{\mathop{\rm Cl}\nolimits}

\newcommand{\argmin}{\mathop{\rm argmin}}

\newtheorem{theorem}{Theorem}[section]
\newtheorem{proposition}[theorem]{Proposition}

\newtheorem{assumption}[theorem]{Assumption}

\newtheorem{lemma}[theorem]{Lemma}
\newtheorem{remark}[theorem]{Remark}
\newtheorem{corollary}[theorem]{Corollary}

\title{\LARGE \bf
Stabilization of Discrete-time Piecewise Affine Systems 
with Quantized Signals 
}

\author{Masashi Wakaiki and Yutaka Yamamoto
\thanks{
This material is based upon work supported by 
The Kyoto University Foundation
and Murata Overseas Scholarship Foundation.
}
\thanks{
M. Wakaiki is with the Center for Control,
Dynamical-systems and Computation (CCDC), University of California,
Santa Barbara, CA 93106-9560 USA
(e-mail:{\tt  \ masashiwakaiki@ece.ucsb.edu}).
Y. Yamamoto is with the Department of Applied Analysis and Complex
Dynamical Systems, Graduate School of Informatics, Kyoto University, Kyoto
606-8501, Japan
(e-mail: {\tt \ yy@i.kyoto-u.ac.jp}).
}%
}

\begin{document}

\maketitle
\thispagestyle{empty}
\pagestyle{empty}

\begin{abstract}
This paper studies quantized control for 
discrete-time piecewise affine systems.
For given stabilizing feedback controllers, 
we propose an encoding strategy for local stability.
If the quantized state is near the boundaries of quantization regions, then
the controller can recompute a better quantization value.
For the design of quantized feedback controllers, 
we also consider the stabilization 
of piecewise affine systems with
bounded disturbances.
In order to derive a less conservative design method
with low computational cost,
we investigate a region to which 
the state belong in the next step.
\end{abstract}

\section{Introduction}
In many applications, 
the input and output of the controller
are quantized signals.
This is due to the physical properties of the actuators/sensors and
the data-rate limitation of links connected to the controller.
Quantized control for linear time-invariant systems actively studied from
various point of view, as surveyed in \cite{Nair2007, Ishii2012}.

Moreover, 
in the context of systems with 
discrete jumps such as switched systems and 
PieceWise Affine (PWA) systems,
control problems with limited information have recently
received increasing attention.
For sampled-data switched systems, 
a stability analysis under 
finite-level static quantization
has been developed in \cite{Wakaiki2014IFAC}, and 
an encoding and control strategy
for stabilization
has been proposed in the state feedback case \cite{Liberzon2014}, whose
related works have been presented 
for the output feedback case \cite{Wakaiki2014CDC}
and 
for the case with bounded disturbances \cite{Yang2015ACC}.
Also, our previous work \cite{WakaikiMTNS2014} has studied the stabilization
of continuous-time switched systems with quantized output feedback, 
based on 
the results in \cite{Brockett2000, Liberzon2003Automatica}.
However, relatively little work has been conducted on quantized control
for PWA systems.
In \cite{Xiaowu2013}, 
a sufficient condition for input-to-state stability
has been obtained for time-delay PWA systems with quantization signals,
but logarithmic quantizers in \cite{Xiaowu2013}
have an infinite number of quantization levels.

The main objective of this paper is to stabilize discrete-time 
PWA systems with quantized signals.
In order to achieve the local asymptotic 
stabilization of discrete-time PWA plants
with finite data rates,
we extend the event-based encoding method in 
\cite{Brockett2000, Liberzon2000workshop}.
It is assumed that we are given feedback controllers that stabilize 
the closed-loop system in the sense that there exists a
piecewise quadratic Lyapunov function.
In the input quantization case, 
the controller receives the original state. On the other hand,
in the state quantization case,
the quantized state and the currently active mode of
the plant are available to the controller. 
The information on the active mode prevents a mode mismatch
between the plant and the controller, and moreover,
allows the controller side to recompute 
a better quantization value if the quantized state transmitted from
the quantizer is near the boundaries of quantization regions.
This recomputation is motivated in Section 7.2 in \cite{Liberzon2014}.

We also investigate the design of 
quantized feedback controllers.
To this end,
we consider the stabilization problem of discrete-time 
PWA systems with bounded disturbances (under no quantization).
The Lypunov-based stability analysis and stabilization
of discrete-time PWA systems has been studied in 
\cite{Feng2002, Ferrari2002} and 
\cite{Cuzzola2002, Lazar2008, Xu2014}
in terms of Linear Matrix Inequalities (LMIs) and Bilinear Matrix Inequalities (BMIs).
In proofs that Lyapunov functions decrease along the trajectories of 
PWA systems, 
the one-step reachable set, that is,
the set to which the state belong in one step, plays an important role.
In stability analysis, the one-step reachable set
can be obtained by linear programming.
By contrast,  in the stabilization case, 
since the next-step state depends on the control input,
it is generally difficult to obtain the one-step reachable set.
Therefore
many previous works for the design of stabilizing controllers
assume that the one-step reachable set is the total state space.
However, 
if disturbances are bounded, then
this assumption leads to conservative results and
high computational loads as the number of the plant mode increases.

We aim to find the one-step reachable set for PWA systems with bounded
disturbances. To this effect, we derive
a sufficient condition on feedback controllers for 
the state to belong to a given polyhedron in one step.
This condition can be used to add constraints on the state and
the input as well.
Furthermore, we obtain a set containing the one-step reachable set
by using the information of the input matrix $B_i$ 
and the input bound $u \in \mathbf{U}$.
This set is conservative because 
the affine feedback structure $u = K_i x + g_i$ for mode $i$
is not considered, but it can be used 
when we design the polyhedra that are assumed to be
given in the above sufficient condition.
Combining the proposed condition with results in 
\cite{Cuzzola2002, Lazar2008, Xu2014} for
Lyapunov functions to be positive and decrease along the trajectories,
we can design stabilizing controllers for PWA systems with bounded
disturbances.

This paper is organized as follows.
The next section shows 
a class of quantizer and a basic assumption on stability.
In Sections III and IV, we present
an encoding strategy 
to achieve local stability for PWA systems
in the input quantization case and the state quantization case,
respectively.
In Section V, we study the one-step reachable set for
the stabilization problem of 
PWA systems with bounded disturbances.
Finally, concluding remarks are given in Section VI.

Due to space constraints, 
all proofs and a numerical example 
have been omitted and 
can be found in \cite{}.

{\it Notation}:
For a set $E \subset \mathbb{R}^{{\sf n} }$,
we denote by $\cl(E)$ the closure of $E$.
For sets $E_1,E_2 \subset \mathbb{R}^{{\sf n}}$,
let $E_1 \oplus E_2 =
\{
v + u:~ v \in E_1,~u \in E_2 
\}$
denote their Minkowski sum.

Let $\lambda_{\min}(P)$ and $\lambda_{\max}(P)$ denote
the smallest and the largest eigenvalue of 
$P \in \mathbb{R}^{{\sf n}\times {\sf n}}$.
Let $M^{\top}$ denote the transpose of
$M \in \mathbb{R}^{\sf m\times n}$.
For $v \in \mathbb{R}^{\sf n}$,
we denote 
the $l$-th entry of $v$ by $v^{(l)}$.
Let $\bf 1$ be a vector all of whose entries are one.
For vectors $v,u \in \mathbb{R}^{\sf n}$, the inequality 
$v \leq u$ means that
$v^{(l)} \leq u^{(l)}$ for every $l=1,\dots, {\sf n}$.
On the other hand, for a square matrix $P$,
the notation $P \succeq 0$ ($P \succ 0$) means that $P $ is symmetric and
semi-positive (positive) definite.

The Euclidean norm of $v \in \mathbb{R}^{\sf n}$ is denoted by
$|v| = (v^*v)^{1/2}$. The Euclidean induced norm of
$M \in \mathbb{R}^{\sf m\times n}$ is defined by
$\|M\| = \sup \{  |Mv |:~v\in \mathbb{R}^{\sf n},~|v|= 1 \}$.
The $\infty$-norm of 
$v=[v_1~\!\dotsb ~\!v_{\sf n} ]^{\top}$ 
is denoted by $|v|_{\infty} = \max\{|v_1|,\dots, |v_{\sf n}|\}$, and
the induced norm of $M \in \mathbb{R}^{\sf m\times n}$ corresponding
to the $\infty$-norm is defined by
$\|M\|_{\infty} = \sup \{  |Mv |_{\infty}:~
v\in \mathbb{R}^{\sf n},~|v|_{\infty}= 1 \}$.
For $r > 0$, let $\mathbf{B}_r = 
\{x \in \mathbb{R}^{\sf n}:~ |x| \leq r\}$
and 
$\mathbf{B}_r^{\infty} = 
\{x \in \mathbb{R}^{\sf n}:~ |x|_{\infty} \leq r\}$.

\section{Quantized Control of PWA systems}
We consider the following class of discrete-time PWA systems:
\begin{equation}
\label{eq:PWAS}
x_{k+1} = A_ix_k + B_i u_k + f_i =: G_i(x_k,u_k)\quad 
(x_k  \in \mathcal{X}_i),
\end{equation}
where $x_k \in \mathbf{X} \subseteq \mathbb{R}^{\sf{n}}$ 
is the state and $u_k \in \mathbb{R}^{\sf{m}}$ is the control input.
The set $\mathbf{X}$ is divided into finitely many disjoint 
polyhedra\footnote{
	A polyhedron is the intersection of finitely many halfspaces.
}
$\mathcal{X}_1,\dots,\mathcal{X}_s$: 
${\bf X} = \sum_{i=1}^{s}\mathcal{X}_i $.
We denote the index set $\{1,2,\dots,s\}$ by $\mathcal{S}$. 

Given a feedback gain $K_i \in \mathbb{R}^{\sf{n} \times \sf{m}}$ 
and an affine term $g_i \in \mathbb{R}^{\sf m}$ for 
each mode $i=1,\dots,s$, 
the control input is in the affine state feedback form:
\begin{equation}
\label{eq:input}
u_{k} = K_i x_k + g_i \qquad (x_k  \in \mathcal{X}_i).
\end{equation}
We assume that $f_i = g_i = 0$ if $0 \in \cl(\mathcal{X}_i)$.
We will study 
the design of $K_i$ and $g_i$ in Section V,
but for quantized control in Sections III and IV, 
$K_i$ and $g_i$ are assumed to be given.


\subsection{Quantizers}
In this paper, we use the class of quantizers proposed 
in \cite{Liberzon2003Automatica}.

Let $\mathcal{P}$ be a set composed of finitely many points in 
$\mathbb{R}^{\sf{N}}$.
A quantizer $q$ is a piecewise constant function from
$\mathbb{R}^{\sf N}$ to $\mathcal{P}$.
Geometrically, this means that
$\mathbb{R^{\sf{N}}}$ is divided into
a finite number of quantization regions of the form
$\{\xi \in \mathbb{R^{\sf{N}}}:~q(\xi) = q_p \}$ $(q_p \in \mathcal{P})$.
For the quantizer $q$, we assume that
there exist $M, \Delta$ with $M > \Delta > 0$ such that
\begin{equation}
\label{eq:quantizer_cond1}
|\xi| \leq M \quad \Rightarrow \quad |q(\xi) - \xi|_{\infty} \leq \Delta .
\end{equation}

The condition \eqref{eq:quantizer_cond1} gives an upper bound on
the quantization error if the quantizer saturates.
In this paper, we assume that 
a bound on the magnitude of the initial state is known, 
and hence we do not use
a condition in the case when the quantizer saturates.

We use quantizers with an adjustable parameter $\mu>0$:
\begin{equation}
\label{eq:zoom_q}
q_{\mu}(\xi) = \mu q\left( \frac{\xi }{\mu}\right).
\end{equation}
The quantized value $q_{\mu_k}(\xi_k)$ is the data on $\xi_k$
transmitted to the controller at time $k$. We adjust
$\mu_k$ to 
obtain detailed information on $\xi_k$ near the origin.

\subsection{Assumption on stability}
Define 
\begin{equation}
\label{eq:reachable_set_def}
\mathcal{R}_i := \{ G_i(x,K_ix+g_i):~x\in \mathcal{X}_i\}, 
\end{equation}
which
is the one-step reachable set from $\mathcal{X}_i$ for the 
PWA system \eqref{eq:PWAS} and
the state feedback law \eqref{eq:input}
without quantization.
Define also
\begin{equation}
\label{eq:calB_def}
\mathcal{B}_i :=
\begin{cases}
\{B_id:~|d|_{\infty} \leq \Delta\} 
& \text{(input quantization case)} \\
\{B_iK_id:~|d|_{\infty} \leq \Delta\} 
& \text{(state quantization case)}		
\end{cases}
\end{equation}
We assume that the following stability of the closed-loop system 
is guaranteed by
a piecewise quadratic Lyapunov function:
\begin{assumption}
	\label{ass:Lyapunov_for_QS}
	{\em 
		Consider the PWA system \eqref{eq:PWAS} with
		given affine feedback \eqref{eq:input}.
		Define a function $V_i: \mathcal{X}_i \to \mathbb{R}$ by 
		\begin{equation}
		\label{eq:Lypunov_func_Feng}
		V_i(x) := 
		\begin{cases}
		x^{\top} P_i x & 0 \in \cl (\mathcal{X}_i) \\
		\begin{bmatrix}
		x \\ 1 
		\end{bmatrix}^{\top}
		{\bar P}_i
		\begin{bmatrix}
		x \\ 1 
		\end{bmatrix}
		&0 \not\in \cl (\mathcal{X}_i),
		\end{cases}
		\end{equation}
		where $P_i \in \mathbb{R}^{{\sf n} \times {\sf n}}$ and
		$\bar{P}_i \in \mathbb{R}^{({\sf n}+1) \times ({\sf n}+1)}$
		are symmetric matrices.
		There exist $\alpha, \beta > 0$ and $\gamma_i > 0$ for 
		$i\in \mathcal{S}$, such that 
		the Lypunov function $V:\mathbf{X} \to \mathbb{R}$ 
		defined by $V(x) := V_i(x)$ $(x \in \mathcal{X}_i)$
		satisfies 
		\begin{gather}
		\alpha |x|^2 \leq V(x) \leq \beta |x|^2 
		\label{eq:Lyapunov_bound} \\
		V_j((A_i+B_iK_i)x+f_i+B_ig_i) - V_i(x) \leq -\gamma_i |x|^2
		\label{eq:diff_VjVi}
		\end{gather}
		for every $i \in \mathcal{S}$, 
		$j \in \mathcal{S}_i$, and $x \in \mathcal{X}_i$, where
		$\mathcal{S}_i$ is defined by
		\begin{equation}
		\label{eq:Sei_def}
		\mathcal{S}_i :=
		\left\{
		j \in \mathcal{S}:~
		\mathcal{X}_j  \cap
		\bigl(\mathcal{R}_i \oplus \mathcal{B}_i \bigr)
		\not = \emptyset
		\right\}.
		\end{equation}
	}
\end{assumption}
In Section V, 
we will discuss how to obtain $\mathcal{S}_i$ of \eqref{eq:Sei_def}
in the design process of $K_i$ and $g_i$.

\section{Input Quantization Case}
In this section, we study stabilization with quantized input:
\begin{equation*}
u_k = q(K_i x_k + g_i) \qquad (x_k \in \mathcal{X}_i).
\end{equation*}
The closed-loop system we consider is given by
\begin{align}
x_{k+1} 
&= A_ix_k + B_i q(K_i x_k+g_i) + f_i \qquad (x_k  \in \mathcal{X}_i) \notag \\
&= 
G_i(x_k,K_ix_k+g_i)+ B_i (q(K_ix_k+g_i) - (K_i x_k + g_i)).
\label{eq:input_saturation_stateEq}
\end{align}
We place the following assumption on the state transition:
\begin{assumption}
	\label{ass:input_ver}
	{\it
		Define 
		$\mathcal{B}_i := 
		\{
		B_id: |d|_{\infty} \leq \Delta
		\}$.
		For every $i \in \mathcal{S}$, the 
		one-step reachable set $\mathcal{R}_i$ in \eqref{eq:reachable_set_def}
		satisfies
		$
		\mathcal{R}_i \oplus \mathcal{B}_{i} 
		\subset {\bf X}$.
	}
\end{assumption}

The condition
$
\mathcal{R}_i \oplus \mathcal{B}_{i} 
\subset {\bf X}$
implies 
that $\bf X$ is invariant for the system \eqref{eq:input_saturation_stateEq},
and
checking this condition 
is closely related to how to derive $\mathcal{S}_i$ in \eqref{eq:Sei_def}.
In Section V, we will derive sufficient conditions on $K_i$
and $g_i$ for $\mathcal{R}_i \oplus \mathcal{B}_{i} 
\subset \mathbf{X}$ to hold; 
see Remark \ref{rem:Total_ss_cond}.

First we fix the zoom parameter $\mu=1$.
Similarly to \cite{Bullo2006, Liberzon2003Automatica, Brockett2000},
we show that the Lyapunov function decreases until the state gets to
the corresponding level set.
\begin{theorem}
	\label{thm:MKtom}
	{\em
		Consider the PWA system\eqref{eq:input_saturation_stateEq}
		with given $K_i$ and $g_i$.
		Let Assumptions \ref{ass:Lyapunov_for_QS} and
		\ref{ass:input_ver} hold.
		Fix $\varepsilon_{ij}, \delta_{ij} \in (0,1)$, and define
		\begin{align}
		Q_j &\!:=\!
		\begin{cases}
		P_j & 0 \in \cl (\mathcal{X}_j) \\
		\begin{bmatrix}
		I_{{\sf n} \times {\sf n}} & 0_{{\sf n} \times 1}
		\end{bmatrix}
		\bar P_j
		\begin{bmatrix}
		I_{{\sf n} \times {\sf n}} \\ 0_{{\sf n} \times 1}
		\end{bmatrix}
		& 0 \not\in \cl (\mathcal{X}_j)
		\end{cases} \label{eq:Qj_def} \\
		h_{ij} &\!:=\!
		\begin{cases}
		P_j(f_i+B_ig_i) & 0 \in \cl (\mathcal{X}_j) \\
		\begin{bmatrix}
		I_{{\sf n} \times {\sf n}} & 0_{{\sf n} \times 1}
		\end{bmatrix}
		\bar P_j
		\begin{bmatrix}
		f_i+B_ig_i \\ 1
		\end{bmatrix}
		& 0 \not\in \cl (\mathcal{X}_j)
		\end{cases} \label{eq:gij_def} \\
		\phi_{1,ij} &\!:=\! 
		{\sf m}
		\left(
		\frac{\|B_i^{\top} Q_j B_i\| }{(1\! - \! \varepsilon_{ij}) \delta_{ij} \gamma_{i}}
		+
		\frac{\|(A_i \! +\! B_iK_i)^{\top}
			Q_j B_i\|^2}{((1\! - \! \varepsilon_{ij}) \gamma_i)^2 \delta_{ij} (1-\delta_{ij})}
		\right)
		\notag \\
		\phi_{2,ij} &\!:=\! 
		\frac{2\sqrt{\sf m} \|h_{ij}^{\top}B_i \|}{(1-\varepsilon_{ij}) \delta_{ij} \gamma_{i}}. \notag 
		\end{align}
		Also, let $M > |g_i|$ for all $i \in \mathcal{S}$, and set
		\begin{gather*}
		m_{i} \!:=\! 
		\max_{j \in \mathcal{S}_i}
		\sqrt{\phi_{1,ij}\Delta^2 + \phi_{2,ij}\Delta},\quad
		m \!:=\! \max_{i\in \mathcal{S}} m_{i}
		\\
		\varepsilon_i \!:=\! \max_{j \in \mathcal{S}^e_i} \varepsilon_{i,j}, \quad
		M_K \!:=\! \min_{i \in \mathcal{S}} \frac{M - |g_i|}{\|K_i\|}.
		\end{gather*}
		Define $\mathcal{E}_{M_K}$ and $\mathcal{E}_m$ by
		\begin{align*}
		\mathcal{E}_{M_K} := \{ x : V(x) \leq \alpha M_K^2\}, \quad
		\mathcal{E}_m := \{ x : V(x) \leq \beta m^2\}.
		\end{align*}
		If $m$ and $M_K$ satisfy
		\begin{equation}
		\label{eq:Km_M}
		\beta m^2 < \alpha M_K^2,
		\end{equation}
		then all solutions
		of \eqref{eq:input_saturation_stateEq}
		that start in $\mathcal{E}_{M_K} \cap \mathbf{X}$ 
		enter $\mathcal{E}_m$
		in a finite time $k_0$ satisfying
		\begin{equation}
		\label{eq:k0_cond_input}
		0 \leq
		k_0 \leq \frac{\alpha M_K^2 - \beta m^2 }
		{\min_{i \in \mathcal{S}} \left( \varepsilon_i \gamma_i m_i^2 \right)}
		=: \bar{k}_0.
		\end{equation}
		Furthermore, if 
		\begin{equation}
		\label{eq:invariant_cond}
		\max_{i \in \mathcal{S}} \left( \|A_i\| m + \| B_i \| (\|K_i\| m+
		\sqrt{\sf m}\Delta) + |f_i| \right) \leq 
		\sqrt{\frac{\alpha}{\beta}} M_K
		\end{equation}
		holds, then
		the solution $x_k$ belongs to 
		$\mathcal{E}_{M_K} \cap \mathbf{X}$
		for all $k \geq 0$.
	}
\end{theorem}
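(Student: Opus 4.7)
The plan is to adapt the Brockett--Liberzon encoding-style Lyapunov argument to the piecewise-quadratic setting. First I would verify that the quantizer does not saturate on $\mathcal{E}_{M_K}\cap\mathbf{X}$: from \eqref{eq:Lyapunov_bound}, $x_k\in\mathcal{E}_{M_K}$ gives $\alpha|x_k|^2\leq V(x_k)\leq \alpha M_K^2$, so $|x_k|\leq M_K$. For $x_k\in\mathcal{X}_i$, the definition of $M_K$ then yields $|K_ix_k+g_i|\leq \|K_i\|M_K+|g_i|\leq M$, and the quantizer bound \eqref{eq:quantizer_cond1} forces $|e_k|_\infty\leq\Delta$ for the quantization error $e_k:=q(K_ix_k+g_i)-(K_ix_k+g_i)$, hence $|e_k|\leq\sqrt{{\sf m}}\,\Delta$ and $|e_k|^2\leq {\sf m}\Delta^2$.

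Next I would derive a one-step Lyapunov estimate. Writing $x_{k+1} = y_k + B_i e_k$ with $y_k := (A_i+B_iK_i)x_k+f_i+B_ig_i\in\mathcal{R}_i$, Assumption~\ref{ass:input_ver} forces $x_{k+1}\in\mathcal{R}_i\oplus\mathcal{B}_i\subset\mathbf{X}$, so $x_{k+1}\in\mathcal{X}_j$ for some $j\in\mathcal{S}_i$. Extending each $V_j$ to a quadratic function on $\mathbb{R}^{\sf n}$ and using \eqref{eq:Qj_def}--\eqref{eq:gij_def} to treat both branches of \eqref{eq:Lypunov_func_Feng} uniformly, a direct expansion gives
\begin{equation*}
V_j(x_{k+1})-V_j(y_k) = 2 x_k^{\top} T_{ij} e_k + 2 h_{ij}^{\top} B_i e_k + e_k^{\top} B_i^{\top} Q_j B_i e_k,
\end{equation*}
where $T_{ij}:=(A_i+B_iK_i)^{\top} Q_j B_i$. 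Combining this with \eqref{eq:diff_VjVi} and applying Young's inequality to the cross term with weight $\eta:=(1-\varepsilon_{ij})(1-\delta_{ij})\gamma_i$, then substituting $|e_k|\leq\sqrt{{\sf m}}\,\Delta$ and $|e_k|^2\leq{\sf m}\Delta^2$ into the remaining error, a short calculation yields the sharp form
\begin{equation*}
V(x_{k+1})-V(x_k)\leq -\varepsilon_{ij}\gamma_i|x_k|^2
\end{equation*}
whenever $|x_k|^2\geq \phi_{1,ij}\Delta^2+\phi_{2,ij}\Delta$.

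To finish, I observe that if $V(x_k)>\beta m^2$ then $|x_k|^2>m^2\geq m_i^2\geq\phi_{1,ij}\Delta^2+\phi_{2,ij}\Delta$ for every $j\in\mathcal{S}_i$, so the preceding estimate gives $V(x_{k+1})-V(x_k)\leq -\varepsilon_i\gamma_i m_i^2$; iterating from any $x_0\in\mathcal{E}_{M_K}\cap\mathbf{X}$ then forces $V(x_{k_0})\leq\beta m^2$ within $\bar k_0$ steps as in \eqref{eq:k0_cond_input}. For invariance of $\mathcal{E}_{M_K}\cap\mathbf{X}$, outside $\mathcal{E}_m$ the same decrease keeps $V(x_{k+1})<V(x_k)\leq\alpha M_K^2$; inside $\mathcal{E}_m$, a direct triangle-inequality bound on $x_{k+1}=(A_i+B_iK_i)x_k+f_i+B_ig_i+B_ie_k$ combined with \eqref{eq:invariant_cond} gives $|x_{k+1}|\leq\sqrt{\alpha/\beta}\,M_K$, and hence $V(x_{k+1})\leq\beta|x_{k+1}|^2\leq\alpha M_K^2$. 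I expect the most delicate bookkeeping to lie in the second step: the Young's weight $\eta$ must be tuned so that the residual of the error terms is exactly $\phi_{1,ij}\Delta^2+\phi_{2,ij}\Delta$, and both branches of \eqref{eq:Lypunov_func_Feng} have to be shown to collapse to the same cross-term expression through the definitions of $Q_j$ and $h_{ij}$.
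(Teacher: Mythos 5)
Your proposal follows essentially the same route as the paper's proof: the saturation check via $|x_k|\le M_K$ and the definition of $M_K$, the observation that $x_{k+1}\in\mathcal{R}_i\oplus\mathcal{B}_i\subset\mathbf{X}$ so that \eqref{eq:diff_VjVi} is applicable with some $j\in\mathcal{S}_i$, the quadratic expansion of $V_j$ about $(A_i+B_iK_i)x_k+f_i+B_ig_i$ with cross term $(A_i+B_iK_i)^{\top}Q_jB_i$ and affine term $h_{ij}$, the completion of the square with weight $(1-\varepsilon_{ij})(1-\delta_{ij})\gamma_i$, and the resulting threshold $\phi_{1,ij}\Delta^2+\phi_{2,ij}\Delta$ are exactly the paper's steps. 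One small correction to the invariance argument: the case split should be on the Euclidean ball $\mathbf{B}_m$ rather than on the level set $\mathcal{E}_m$, because $x_k\in\mathcal{E}_m$ only yields $|x_k|\le\sqrt{\beta/\alpha}\,m$, which is not enough for the triangle-inequality bound that \eqref{eq:invariant_cond} controls; the repair is immediate within your own framework, since the decrease estimate already covers every $x_k$ with $|x_k|\ge m_i$ (in particular the annulus $\mathcal{E}_m\setminus\mathbf{B}_m$), leaving the triangle inequality to handle only $|x_k|\le m$, which is how the paper argues.
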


\begin{proof}
	In order to utilize \eqref{eq:diff_VjVi},
	first we show that if $x_k \in \mathcal{X}_i \cap \mathcal{E}_{M_K}$,
	then there exists $j \in \mathcal{S}_{i}$ such that $x_{k+1} \in \mathcal{X}_j$.
	Suppose that $x_k \in \mathcal{X}_i \cap \mathcal{E}_{M_K}$.
	Define $d_k$ by $d_k := q(K_ix_k+g_i)-(K_ix_k+g_i)$. Then
	we have $x_{k+1} = G_i(x_k,K_ix_k+g_i) + B_i d_k$.
	Since $x_k \in \mathcal{X}_i$, it follows that
	$G_i(x_k,K_ix_k + g_i) \in \mathcal{R}_i$.
	Moreover, $x_k \in \mathcal{E}_{M_K}$ implies $|K_ix_k+g_i| \leq M$, and hence
	$|d_k|_{\infty} \leq \Delta $ from \eqref{eq:quantizer_cond1} and
	$B_i d_k \in \mathcal{B}_i$. We therefore obtain
	\begin{equation}
	\label{eq:x_k+1_RB}
	x_{k+1} \in \mathcal{R}_i \oplus \mathcal{B}_i.
	\end{equation}
	Therefore, 
	from Assumption \ref{ass:input_ver},
	there exists $j \in \mathcal{S}$ such that 
	\begin{equation}
	\label{eq:x_k+1_Xj}
	x_{k+1} \in \mathcal{X}_j.
	\end{equation}
	Combining \eqref{eq:x_k+1_RB} and \eqref{eq:x_k+1_Xj}, 
	we have 
	$\mathcal{X}_j  \cap (\mathcal{R}_i \oplus \mathcal{B}_i) 
	\not= \emptyset$. Thus
	$j \in \mathcal{S}_i$ by definition.
	
	In what follows, for simplicity of notation, we omit
	the indices $i$ and $j$ of $\varepsilon_{ij}$, $\delta_{ij}$, $\gamma_{i}$, 
	$\phi_{1,ij}$, and $\phi_{2,ij}$. 
	Define $\bar{A}_i := A_i+B_iK_i$ and $e_k := B_id_k$.
	Since \eqref{eq:diff_VjVi} holds, we have 
	\begin{align*}
	V(x_{k+1}) - V(x_k) 
	&\leq
	-\gamma |x_k|^2 + 2|\bar A_i^{\top}Q_j e_k| \cdot |x_k| +
	e_k^{\top}Q_je_k + 2 h_{ij} e_k \\
	&= 
	-\varepsilon \gamma |x_k|^2 - (1-\varepsilon) (1-\delta) \gamma |x_k|^2
	-  (1-\varepsilon) \delta \gamma |x_k|^2 \\[-2pt]
	&\qquad \quad+ 2|\bar A_i^{\top}Q_j e_k| \cdot |x_k| +
	e_k^{\top}Q_je_k + 2 h_{ij} ^{\top} e_k \\
	&=
	-\varepsilon \gamma |x_k|^2
	- (1\!-\! \varepsilon) (1\!-\!\delta) \gamma 
	\left( 
	|x_k| \!-\! \frac{|\bar A_i^{\top}Q_j e_k|}{(1\!-\!\varepsilon) (1\!-\!\delta) \gamma } 
	\right)^2 \\[-2pt]
	&\qquad \quad - (1\!-\! \varepsilon) \delta \gamma |x_k|^2 
	+
	e_k^{\top}Q_je_k  
	+ \frac{|\bar A_i^{\top}Q_j e_k|^2 }{(1\!-\! \varepsilon) (1\!-\! \delta) \gamma }
	+ 2h_{ij}^{\top}e_k \\
	&\leq
	-\varepsilon \gamma |x_k|^2 - \Upsilon,
	\end{align*}
	where 
	\begin{equation*}
	\Upsilon := (1\!-\!\varepsilon) \delta \gamma
	|x_k|^2 \!-\! 
	e_k^{\top}Q_je_k \!-\! \frac{|\bar A_i^{\top}Q_j e_k|^2 }
	{(1\!-\!\varepsilon) (1\!-\!\delta) \gamma }
	\!-\!
	2h_{ij} ^{\top}e_k.
	\end{equation*}
	If $x_k \in \mathcal{E}_{M_K}$, then
	\[
	|d_k| 
	\leq
	\sqrt{\sf m} |d_k|_{\infty} =
	\sqrt{\sf m}
	|K_ix_k+g_i - q(K_i x_k+g_i)|_{\infty}
	\leq \sqrt{\sf m} \Delta.
	\]
	Hence, noticing $e_k = B_id_k$, we have
	\begin{align*}
	e_k^{\top}Q_je_k  + \frac{|\bar A_i^{\top}Q_j e_k|^2 }
	{(1-\varepsilon) (1-\delta) \gamma } 
	&\leq
	\left(
	\|B_i^{\top} Q_j B_i\| +
	\frac{\|\bar A_i^{\top}Q_j B_i\|^2 }{(1-\varepsilon) (1-\delta) \gamma }
	\right) \cdot {\sf m}\Delta^2 \\
	h_{ij} ^{\top} e_k 
	&\leq \|h_{ij}^{\top} B_i \| \cdot \sqrt{\sf m} \Delta.
	\end{align*}
	Therefore 
	\begin{align*}
	\frac{\Upsilon}{(1-\varepsilon) \delta \gamma} 
	\geq
	|x_k|^2 - \phi_1 \Delta^2 - \phi_2 \Delta  
	\geq
	|x_k|^2 - m_{i}^2.
	\end{align*}
	For every $i \in \mathcal{S}$, we obtain
	\begin{equation}
	\label{eq:Lyapunov_decrease}
	V(x_{k+1}) - V(x_{k}) \leq -\varepsilon_i \gamma_i |x_k|^2
	\leq - \min_{i \in \mathcal{S}} \left( \varepsilon_i \gamma_i m_i^2\right)
	\end{equation}
	whenever $|x_k| \geq m_i$.
	Note that the most right side of \eqref{eq:Lyapunov_decrease} 
	is independent of the plant mode $i$.
	
	By \eqref{eq:Km_M}, we have
	\[
	\mathbf{B}_m \subset
	\mathcal{E}_{m} \subset
	\mathcal{E}_{M_K}.
	\]
	In conjunction with \eqref{eq:Lyapunov_decrease}, this shows that
	if the initial state $x_0$ belongs to $\mathcal{E}_{M_K}$,
	then $x_{k_0} \in \mathcal{E}_{m}$ holds for some integer $k_0$
	satisfying \eqref{eq:k0_cond_input}.
	
	Let us next prove that $\mathcal{E}_{M_K}\cap \mathbf{X}$ is 
	an invariant region for the system \eqref{eq:input_saturation_stateEq}.
	From \eqref{eq:Lyapunov_decrease},
	$x_{k} \in \mathcal{E}_{M_K}$ until $x_k \not\in \mathcal{B}_m$.
	Once $x_k \in \mathcal{B}_m$,
	we have
	\begin{equation*}
	|x_{k+1}| \leq \| A_i\| m + \| B_i \| (\|K_i\| m+ \sqrt{\sf m} \Delta) 
	+ |f_i|.
	\end{equation*}
	Therefore if 
	\eqref{eq:invariant_cond} holds, 
	then $x_{k} \in \mathbf{B}_m$ leads to $x_{k+1} \in \mathcal{E}_{M_K}$.
	The state trajectories again go to $\mathbf{B}_m$
	while belonging to $\mathcal{E}_{M_K}$
	Since Assumption \ref{ass:input_ver} gives 
	$x_k \in \mathbf{X}$ for all $k \geq 0$, 
	we see that $x_k \in \mathcal{E}_{M_K}\cap \mathbf{X}$ for all
	$k \geq 0$.
\end{proof}

As in \cite{Brockett2000},
we can achieve the state convergence to the origin
by adjusting the zoom parameter $\mu$:
\begin{theorem}
	\label{thm:Input_convergence}
	{\it
		Consider the PWA system \eqref{eq:input_saturation_stateEq}
		with given $K_i$ and $g_i$.
		Let Assumptions \ref{ass:Lyapunov_for_QS} and 
		\ref{ass:input_ver} hold.
		Let the initial state $x_0 \in \mathcal{E}_{M_K} \cap {\bf X}$ and
		the initial zoom parameter $\mu_0 = 1$.
		Assume that \eqref{eq:Km_M} and \eqref{eq:invariant_cond} hold, 
		and define
		\[
		\Omega := \sqrt{\frac{\beta}{\alpha}} \cdot
		\frac{m}{M_K} < 1.
		\]
		Adjust $\mu$ by $\mu_{k} = \Omega\mu_{k-1}$
		when $x_k$ gets to $\mathbf{B}_{\mu_{k-1} m}$, and 
		send to the plant 
		the quantized input $q_{\mu_k}(K_ix_k+g_i)$ at time $k$ if 
		$x_k \in \mathcal{X}_i$.
		This event-based update strategy of $\mu$
		leads to $x_k \to 0~(k \to \infty)$.
	}
\end{theorem}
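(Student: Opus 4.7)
The plan is to iterate Theorem~\ref{thm:MKtom} along the zoom schedule, treating each stage as a fresh instance of Theorem~\ref{thm:MKtom} in which the quantizer $q$ is replaced by $q_{\mu_k}$. I would set up an induction on the zoom index $k$, with invariant that at the $k$-th zoom event $t_k$ one has $\mu_{t_k} = \Omega^k$, $x_{t_k} \in \mathcal{E}_{\mu_{t_k} M_K}$, and all subsequent iterates stay in $\mathcal{E}_{\mu_{t_k} M_K} \cap \mathbf{X}$ until the next trigger. The base case $k=0$ is exactly Theorem~\ref{thm:MKtom} applied with $\mu_0 = 1$ and delivers the first trigger at some $t_1 \leq \bar{k}_0$.

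The central identity that makes the inductive step go through is the choice of $\Omega$. When the trigger $x_{t_k} \in \mathbf{B}_{\mu_{t_k-1}m}$ fires, the bound $V(x_{t_k}) \leq \beta |x_{t_k}|^2 \leq \beta \mu_{t_k-1}^2 m^2$ together with $\Omega^2 = (\beta/\alpha)(m/M_K)^2$ yields $V(x_{t_k}) \leq \alpha \mu_{t_k}^2 M_K^2$, so $x_{t_k}$ is still inside the new outer level set $\mathcal{E}_{\mu_{t_k} M_K}$ after $\mu$ is updated to $\mu_{t_k} = \Omega \mu_{t_k-1}$. I would then replay the Lyapunov computation of Theorem~\ref{thm:MKtom} verbatim, using that the effective quantization error is now $\mu_{t_k} \Delta$ and the saturation threshold is $\mu_{t_k} M$, so every appearance of $\Delta$ in the proof is rescaled by the corresponding power of $\mu_{t_k}$. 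The rescaled conclusion is that $x_t$ stays in $\mathcal{E}_{\mu_{t_k} M_K}$ for all $t \geq t_k$ and reaches $\mathbf{B}_{\mu_{t_k} m}$ within at most $\bar{k}_0$ steps, firing the next zoom event.

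The main obstacle is that the substitution $\Delta \mapsto \mu \Delta$ is only \emph{exact} when the vectors $h_{ij}$ and the offsets $g_i$ vanish: otherwise the cross term $2 h_{ij}^{\top} e_k$ in the Lyapunov chain of Theorem~\ref{thm:MKtom} scales linearly rather than quadratically in $\mu$, and the saturation check $|K_i x + g_i| \leq \mu M$ fails whenever $|g_i| > 0$ and $\mu < 1$. The hypothesis $f_i = g_i = 0$ whenever $0 \in \cl(\mathcal{X}_i)$ rescues the argument: every $\mathcal{X}_j$ with $0 \not\in \cl(\mathcal{X}_j)$ sits at strictly positive distance from the origin, so once $k$ is large enough that $\mu_{t_k} M_K$ is below this minimum distance, the containment $\mathcal{E}_{\mu_{t_k} M_K} \subset \mathbf{B}_{\mu_{t_k} M_K}$ forces only the linear modes to be active, all relevant $h_{ij}$ vanish, and the rescaling becomes exact. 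The finitely many earlier zoom stages, in which affine modes may still be visited, are handled by the unscaled Theorem~\ref{thm:MKtom} and contribute only a finite delay. Iterating then gives $|x_{t_k}| \leq \Omega^{k-1} m \to 0$, and since between triggers $x_t$ is confined to $\mathcal{E}_{\mu_{t_k} M_K} \subset \mathbf{B}_{\mu_{t_k} M_K}$, which itself shrinks geometrically, one concludes $x_t \to 0$.
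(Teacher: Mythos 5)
Your overall route is the same as the paper's: trigger-by-trigger iteration of Theorem~\ref{thm:MKtom} with the quantizer replaced by $q_{\mu}$, with the identity $\beta(\mu_{k-1}m)^2=\alpha(\mu_k M_K)^2$ (i.e.\ the choice of $\Omega$) guaranteeing $x_{t_k}\in\mathcal{E}_{\mu_{t_k}M_K}$ after each update, and geometric decay of $\mu_k$ giving $x_k\to 0$. The one ingredient of the paper's proof you do not mention is the observation that the hypotheses of Theorem~\ref{thm:MKtom} remain valid at every zoom level because $\mathcal{B}_i^{[p]}:=\{B_id:|d|_\infty\le\Omega^p\Delta\}\subset\mathcal{B}_i$, hence $\mathcal{R}_i\oplus\mathcal{B}_i^{[p]}\subset\mathbf{X}$ and $\mathcal{S}_i^{[p]}\subset\mathcal{S}_i$, so Assumption~\ref{ass:input_ver} and \eqref{eq:diff_VjVi} persist under zooming; this is worth stating. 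To your credit, you have also put your finger on a genuine non-homogeneity that the paper's proof passes over silently: the term $\phi_{2,ij}\Delta$ coming from $2h_{ij}^{\top}e_k$ scales like $\mu$ rather than $\mu^2$, and the saturation bound $|K_ix+g_i|\le\mu M$ can fail for $\mu<1$ when $g_i\neq 0$, so the substitution $\Delta\mapsto\mu\Delta$, $m\mapsto\mu m$, $M_K\mapsto\mu M_K$ is not literally exact.

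However, your patch for this issue has two gaps. First, the intermediate stages with $0<\mu_{t_k}<1$ but $\mu_{t_k}M_K$ not yet below the distance from the origin to the affine regions are \emph{not} ``handled by the unscaled Theorem~\ref{thm:MKtom}'': that theorem applies only at $\mu=1$. At those stages the quantizer in use is $q_{\mu_{t_k}}$, the decrease threshold becomes $\sqrt{\phi_{1}\mu^2\Delta^2+\phi_{2}\mu\Delta}>\mu m$, and the trigger $x_k\in\mathbf{B}_{\mu_{k-1}m}$ is therefore not guaranteed to fire at all, so ``contribute only a finite delay'' is unjustified as stated. Second, ``all relevant $h_{ij}$ vanish'' once only modes with $0\in\cl(\mathcal{X}_i)$ are active is not automatic: by \eqref{eq:gij_def}, for a target mode $j$ with $0\notin\cl(\mathcal{X}_j)$ one has $h_{ij}=\bigl[\,I~~0\,\bigr]\bar P_j\bigl[\,0^{\top}~~1\,\bigr]^{\top}\neq 0$ even when $f_i=g_i=0$; you must additionally argue (using $f_i=g_i=0$ and the $O(\mu)$ quantization error) that the one-step image of a sufficiently small neighborhood of the origin cannot reach any $\mathcal{X}_j$ with $0\notin\cl(\mathcal{X}_j)$, so that such $j$ drop out of $\mathcal{S}_i^{[p]}$. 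Be aware that the paper's own proof does not resolve these points either — it simply reapplies Theorem~\ref{thm:MKtom} at each scale — so closing them would require a genuinely sharper argument than either text currently contains.
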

\begin{proof}
	First we
	prove that 
	as long as the quantizer does not saturate,
	the state trajectory belongs to $\mathbf{X}$ and 
	the \eqref{eq:diff_VjVi} holds for all $k \geq 0$.
	Define $\mathcal{B}_i^{[p]} :=
	\{ B_i d : |d|_{\infty} \leq \Omega^p \Delta
	\}$.
	Since $\mathcal{B}_i^{[p]} \subset \mathcal{B}_i$,
	if Assumption \ref{ass:input_ver} holds, then
	$
	\mathcal{R}_i \oplus \mathcal{B}_i^{[p]}
	\subset {\bf X}
	$ $(i \in \mathcal{S})$
	for every $p \geq 0$.
	Hence $x_k \in \mathbf{X}$ for all $k \geq 0$
	unless the quantizer saturates.
	Moreover, if we define
	$\mathcal{S}_i^{[p]}$ by the set
	consisting of all $j \in \mathcal{S}$ satisfying
	$\mathcal{X}_j \cap \bigl(\mathcal{R}_i \oplus 
	\mathcal{B}_{\Omega^p \Delta} \bigr) 
	\not = \emptyset$ as in \eqref{eq:Sei_def},
	then $\mathcal{S}_i^{[p]} \subset \mathcal{S}_i^{[0]} = \mathcal{S}_i$.
	Thus \eqref{eq:diff_VjVi} holds for every $i \in \mathcal{S}$ and
	$j \in \mathcal{S}_i^{[p]}$, and hence we have \eqref{eq:diff_VjVi}
	for all $k \geq 0$ unless the quantizer saturation occurs.
	
	Let an update occur at $k=\ell_0$, i.e.,
	$x_{\ell_0} \in \mathbf{B}_{\mu_{\ell_0-1} m}$
	and
	$\mu_{\ell_0} = \Omega \mu_{\ell_0 - 1}$. Then
	we have
	\[
	\beta( \mu_{\ell_0-1}m)^2
	= 
	\alpha( \mu_{\ell_0}M_K)^2.
	\]
	Therefore $\mathcal{E}_{\mu_{\ell_0} M_K}$ defined by
	\begin{align*}
	\mathcal{E}_{\mu_{\ell_0} M_K} := 
	\{ x : V(x) \leq \alpha ( \mu_{\ell_0} M_K)^2\}
	\end{align*}
	satisfies $\mathcal{E}_{\mu_{\ell_0} M_K} = 
	\mathcal{E}_{\mu_{\ell_0-1} m} := 
	\{ x : V(x) \leq \beta( \mu_{k-1}m)^2\}
	\supset \mathbf{B}_{\mu_{\ell_0-1} m}$.
	Since $x_{\ell_0} \in \mathbf{B}_{\mu_{\ell_0-1} m}$,
	it follows that
	$x_{\ell_0} \in \mathcal{E}_{\mu_{\ell_0} M_K}$.
	Hence Theorem \ref{thm:MKtom} shows that for all $k \geq 0$,
	$x_k \in \mathcal{E}_{\mu_{k} M_K}$, which means
	$|x_k| \leq \mu_k M_K$ and the quantizer does not 
	saturate for every $k \geq 0$.
	Moreover, the update period does not exceed $\bar{k}_0$ in \eqref{eq:k0_cond_input}.
	Since $\Omega < 1$, it follows that $\mu_k \to 0$ as $k \to \infty$.
	Thus $x_k \to 0$ as $k \to \infty$.
\end{proof}

\begin{remark}
	For {\em continuous-time} systems, the level sets $\mathcal{E}_{M_K}$
	and $\mathcal{E}_m$ are invariant regions of
	the state trajectories~\cite{Liberzon2003Automatica}.
	However, for {\em discrete-time} systems, 
	$\mathcal{E}_m$ may not be invariant. We therefore need the
	event-based adjustment of the zoom parameter as in
	\cite[Section III]{Brockett2000} and
	\cite{Liberzon2000workshop}.
\end{remark}

\section{State Quantization Case}
Let us next study stabilization of PWA systems with quantized state feedback.

We assume that 
the controller receives the information on the quantized state and the active mode.
\begin{assumption}
	\label{ass:Con_QS_AM}\
	{\it
		The quantizer has the information on 
		the switching regions $\{\mathcal{X}_i\}_{i\in\mathcal{S}}$.
		The quantizer sends to the controller the information on
		the quantized state and the active mode.
	}
\end{assumption}
Under Assumption \ref{ass:Con_QS_AM},
the control $u_k$ is given by
\begin{align*}
u_k = K_i q(x_k) + g_i \qquad (x_k \in \mathcal{X}_i).
\end{align*}
The closed-loop system we consider can be written in this way:
\begin{align}
x_{k+1} &= A_ix_k + B_i K_i q(x_k) + f_i +B_ig_i
\qquad
(x_k  \in \mathcal{X}_i)  \notag \\
&=
G_i(x_k,K_ix_k+g_i) + B_iK_i (q(x_k) - x_k).
\label{eq:state_saturation_stateEq}
\end{align}

\subsection{Stability analysis}
We place an assumption similar to Assumption \ref{ass:input_ver}. 
\begin{assumption}
	\label{ass:state_ver}
	{\it
		Define 
		$\mathcal{B}_i := 
		\{
		B_iK_id: |d|_{\infty} \leq \Delta
		\}$.
		For every $i \in \mathcal{S}$,
		the 
		one-step reachable set 
		$\mathcal{R}_i$ in \eqref{eq:reachable_set_def}
		satisfies
		$
		\mathcal{R}_i \oplus \mathcal{B}_{i} 
		\subset {\bf X}$.
	}
\end{assumption}

See Remark \ref{rem:Total_ss_cond} for the condition 
$
\mathcal{R}_i \oplus \mathcal{B}_{i} 
\subset {\bf X}
$.

As in the input quantization case, we first fix $\mu = 1$ and
obtain a result similar to Theorem \ref{thm:MKtom}, based on
the technique in \cite{Brockett2000}.

\begin{theorem}
	\label{thm:Mtom}
	{\it
		Consider the PWA system \eqref{eq:state_saturation_stateEq}
		with given $K_i$ and $g_i$.
		Let Assumptions \ref{ass:Lyapunov_for_QS} and 
		\ref{ass:state_ver} hold.
		Fix $\varepsilon_{ij}, \delta_{ij} \in (0,1)$.
		Define $Q_j$ and $h_{ij}$ as in \eqref{eq:Qj_def}
		and \eqref{eq:gij_def} respectively, and 
		define $\phi_{1,ij}$ and $\phi_{2,ij}$ by
		\begin{align*}
		\phi_{1,ij} &\!:=\! 
		{\sf n} \! 
		\left(
		\frac{
			\|K_i^{\top}B_i^{\top} Q_j B_iK_i\| 
		}{(1\!-\! \varepsilon_{ij}) \delta_{ij} \gamma_{i}}
		\!+ \! \frac{\|(A_i \!+ \! B_iK_i)^{\!\top}Q_j B_iK_i\|^2}
		{((1\!- \! \varepsilon_{ij}) \gamma_{i})^2 (1\!- \!\delta_{ij}) \delta_{ij} } 
		\right) \\
		\phi_{2,ij} &\!:=\!
		\frac{2\sqrt{\sf n}\|h_{ij}^{\top}B_i K_i\|}
		{(1-\varepsilon_{ij}) \delta_{ij} \gamma_{i}}.
		\end{align*}
		Set $m_{i,j}$,
		$m_i$, $m$, $\alpha_{\min}$, and $\beta_{\max}$
		as in Theorem \ref{thm:MKtom},
		and set 
		\begin{equation*}
		\tilde m := m + \sqrt{\sf n} \Delta,\qquad
		\bar m := m + 2\sqrt{\sf n} \Delta.
		\end{equation*}
		Define $\mathcal{E}_{M}$ and $\mathcal{E}_{\bar m}$ by
		\begin{align*}
		\mathcal{E}_{M} := \{ x : V(x) \leq \alpha M^2\}, \quad
		\mathcal{E}_{\bar m} := \{ x : V(x) \leq \beta \bar{m}^2\}.
		\end{align*}
		If $M$ satisfies
		\begin{equation}
		\label{eq:Km_M_state}
		\beta \bar{m}^2 < \alpha  M^2,
		\end{equation}
		then 
		all solutions
		of \eqref{eq:state_saturation_stateEq}
		that start in $\mathcal{E}_{M} \cap \mathbf{X}$ 
		enter $\mathcal{E}_{\bar m}$
		in a finite time $k_0$ satisfying
		\begin{equation}
		\label{eq:k0_cond}
		0 \leq
		k_0 \leq \frac{\alpha ( M^2 - m^2) }
		{\min_{i \in \mathcal{S}} \left( \varepsilon_i \gamma_i m_i^2 \right)}
		=: \bar k_0,
		\end{equation}
		and $x \in \mathcal{E}_{\bar m}$ can be observed
		from $q(x) \in \mathbf{B}_{\tilde m}$.
		Furthermore, 
		if 
		\begin{equation}
		\label{eq:invariant_cond_state_ver}
		\max_{i \in \mathcal{S}} \left( \|A_i\| \bar m + 
		\| B_i K_i\| \tilde m + |f_i+B_ig_i| \right) \leq 
		\sqrt{\frac{\alpha }{\beta }} M,
		\end{equation}
		then the solution belongs to $\mathcal{E}_{M} \cap \mathbf{X}$
		for all $k \geq 0$.
	}
\end{theorem}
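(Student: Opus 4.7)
The plan is to follow the same template as the proof of Theorem \ref{thm:MKtom}, substituting the state‐quantization error $d_k := q(x_k) - x_k$ for the input‐quantization error and replacing the matrix $B_i$ by $B_i K_i$ throughout the Lyapunov decrease estimate. In particular, rewriting the closed loop as $x_{k+1} = G_i(x_k, K_ix_k+g_i) + e_k$ with $e_k := B_iK_i(q(x_k)-x_k)$ brings the analysis into exactly the form used before, at the cost of the different weights that already appear in the new $\phi_{1,ij}, \phi_{2,ij}$.

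First I would verify, as in Theorem \ref{thm:MKtom}, that if $x_k \in \mathcal{X}_i \cap \mathcal{E}_M$ then $x_{k+1}$ lies in some $\mathcal{X}_j$ with $j \in \mathcal{S}_i$. The condition $x_k \in \mathcal{E}_M$ gives $|x_k| \le M$ via \eqref{eq:Lyapunov_bound}, so \eqref{eq:quantizer_cond1} yields $|d_k|_\infty \le \Delta$, hence $e_k \in \mathcal{B}_i$ in the sense of \eqref{eq:calB_def}; combining with $G_i(x_k,K_ix_k+g_i)\in\mathcal{R}_i$ and Assumption \ref{ass:state_ver} gives $x_{k+1}\in \mathcal{R}_i \oplus \mathcal{B}_i \subset \mathbf{X}$, and the defining relation of $\mathcal{S}_i$ in \eqref{eq:Sei_def} yields $j\in\mathcal{S}_i$. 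This legitimizes the use of \eqref{eq:diff_VjVi} for the transition under consideration.

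Next I would expand $V(x_{k+1}) - V(x_k)$ as in the input case, splitting $\gamma_i$ into $\varepsilon_i\gamma_i + (1-\varepsilon_i)(1-\delta_{ij})\gamma_i + (1-\varepsilon_i)\delta_{ij}\gamma_i$ and completing the square in the cross term $|(A_i+B_iK_i)^\top Q_j e_k|\cdot|x_k|$. Using $|d_k| \le \sqrt{\mathsf n}\,|d_k|_\infty \le \sqrt{\mathsf n}\Delta$ and $e_k = B_iK_i d_k$, each quadratic/linear residual in $e_k$ is dominated by the new $\phi_{1,ij}\Delta^2 + \phi_{2,ij}\Delta = m_{i,j}^2$, so that $V(x_{k+1}) - V(x_k) \le -\varepsilon_i\gamma_i|x_k|^2$ whenever $|x_k| \ge m_i$. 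Summing this inequality over $k$ while $x_k \notin \mathbf{B}_m$ and invoking \eqref{eq:Km_M_state} yields the finite‐time bound $k_0 \le \bar k_0$ of \eqref{eq:k0_cond}, and lands $x_{k_0}$ in $\mathbf{B}_m$. The observability statement then follows by a double application of the quantization bound: if $|x|\le m \le M$ then $|q(x) - x| \le \sqrt{\mathsf n}\,\Delta$, giving $|q(x)| \le m + \sqrt{\mathsf n}\Delta = \tilde m$; conversely $q(x) \in \mathbf{B}_{\tilde m}$ and $|x| \le M$ forces $|x| \le \tilde m + \sqrt{\mathsf n}\Delta = \bar m$, so $V(x) \le \beta \bar m^2$ by \eqref{eq:Lyapunov_bound} and $x \in \mathcal{E}_{\bar m}$.

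For invariance of $\mathcal{E}_M \cap \mathbf{X}$, I would argue as in Theorem \ref{thm:MKtom}: the Lyapunov decrease keeps $x_k$ inside $\mathcal{E}_M$ until $x_k$ enters $\mathbf{B}_{\bar m}$ (this is the appropriate ``small set'' here because of the measurement inflation), and once $x_k \in \mathbf{B}_{\bar m}$, the identity $q(x_k) = x_k + d_k$ together with $|q(x_k)| \le \tilde m$ gives the estimate $|x_{k+1}| \le \|A_i\|\bar m + \|B_iK_i\|\tilde m + |f_i + B_ig_i|$, which under \eqref{eq:invariant_cond_state_ver} is bounded by $\sqrt{\alpha/\beta}\,M$ and hence puts $x_{k+1}$ back into $\mathcal{E}_M$ via \eqref{eq:Lyapunov_bound}. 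Assumption \ref{ass:state_ver} supplies $x_k \in \mathbf{X}$ along the way, completing the argument. The only genuinely new step, and the one I would write most carefully, is the observability reduction between $\mathcal{E}_{\bar m}$ and $\mathbf{B}_{\tilde m}$: it is what justifies using the quantized measurement (rather than the unavailable true state) as the event trigger in the subsequent convergence theorem, and getting the two inflations $\sqrt{\mathsf n}\Delta$ and $2\sqrt{\mathsf n}\Delta$ in the right places is where careful bookkeeping is required.
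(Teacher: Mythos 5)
Your proposal is correct and follows essentially the same route as the paper: reduce to the Theorem~\ref{thm:MKtom} computation with $e_k = B_iK_i d_k$, conclude that the state reaches $\mathbf{B}_m$ within $\bar k_0$ steps, and then perform the two $\sqrt{\mathsf n}\Delta$ inflations to pass between $\mathbf{B}_m$, the observable event $q(x)\in\mathbf{B}_{\tilde m}$, and $\mathbf{B}_{\bar m}\subset\mathcal{E}_{\bar m}$, followed by the same one-step invariance estimate. The only cosmetic difference is that the paper phrases the ``$q(x_k)$ enters $\mathbf{B}_{\tilde m}$ within $\bar k_0$ steps'' step as a proof by contradiction, whereas you argue it directly; the content is identical.
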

\begin{proof}
	If we define $e_k := B_iK_i d_k$, then
	the proof follows the same lines as that of Theorem 3.7
	until \eqref{eq:Lyapunov_decrease}.
	We see that
	the Lyapunov function decreases
	if the initial state belongs to $\mathcal{E}_{M} \cap \mathbf{X}$
	and 
	if the state does not arrive at $\mathbf{B}_m$.
	
	We show that
	the quantized state $q(x_k)$ gets to $\mathbf{B}_{\tilde m}$ at 
	$k \leq \bar{k}_0$ as follows.
	Suppose, on the contrary, that
	$q(x_k) \not\in \mathbf{B}_{\tilde m}$ for all $k = 0,\dots, \bar{k}_0$.
	If $x_k \in \mathbf{B}_m$, 
	we have $q(x_k) \in \mathbf{B}_{\tilde m}$ from
	$|q(x_k) - x_k| \leq 
	\Delta$.
	Therefore we have $x_k \not\in \mathbf{B}_m$ for all $k=0,\dots, \bar{k}_0$.
	However, if $x_k \not\in \mathbf{B}_m$ for all
	$k \leq \bar {k}_0$, then
	the Lyapunov function decreases as \eqref{eq:Lyapunov_decrease},
	and hence
	$V(x(\bar{k}_0)) \leq \alpha m^2$. 
	This implies that $|x(\bar{k}_0) |\leq m$, 
	which leads to a contradiction.
	
	From $q(x_k) \in \mathbf{B}_{\tilde m}$, we observe that
	$x_k \in  \mathbf{B}_{\bar m}$ and hence that $x_k \in \mathcal{E}_{\bar m}$. 
	Fig.~\ref{fig:state_trajectory} illustrates the regions 
	used in this proof.
	\begin{figure}[t]
		\centering
		\includegraphics[width = 8cm,clip]{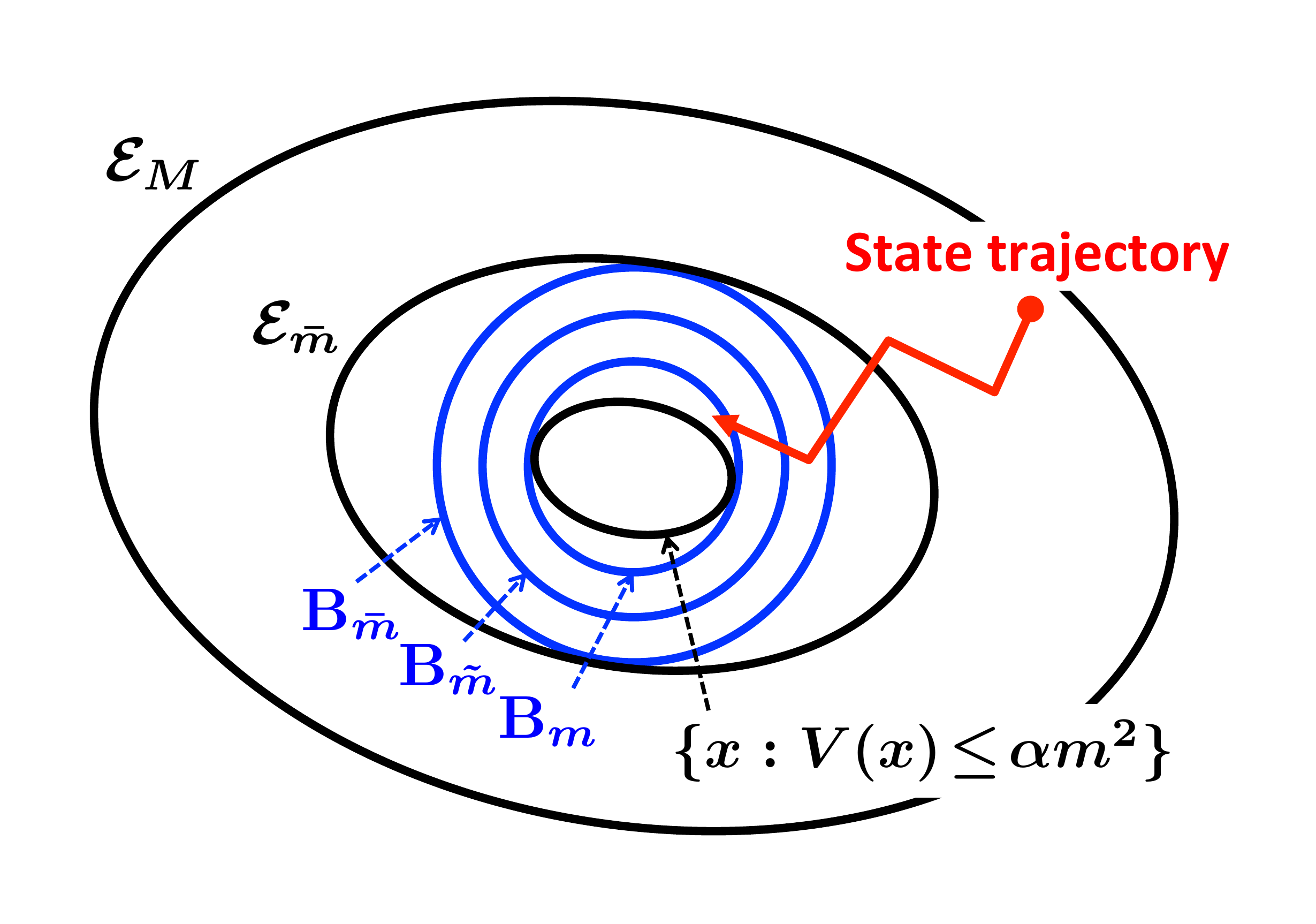}
		\caption{The regions used in the proof}
		\label{fig:state_trajectory}
	\end{figure}
	
	The invariance of $\mathcal{E}_{M} \cap \mathbf{X}$ for
	the state trajectories
	can be proved as in Theorem \ref{thm:Input_convergence}.
	This completes the proof.
\end{proof}

In the input quantization case
of Theorem \ref{thm:Input_convergence},
we use the {\em original} state for the adjustment of 
the zoom parameter $\mu$.
By contrast, in the state quantization case, 
we can achieve the asymptotic stability by
adjusting $\mu$ with the {\em quantized} state.
\begin{theorem}
	\label{thm:state_convergence}
	{\em
		Consider the PWA system \eqref{eq:state_saturation_stateEq}
		with given $K_i$ and $g_i$.
		Let Assumptions \ref{ass:Lyapunov_for_QS} and 
		\ref{ass:state_ver} hold.
		Let the initial state $x_0 \in \mathcal{E}_{M} \cap {\bf X}$ and
		the initial zoom parameter $\mu_0 = 1$.
		Assume that \eqref{eq:Km_M_state} and \eqref{eq:invariant_cond_state_ver} 
		hold, and define
		\begin{equation}
		\label{eq:Omega_def_state_case}
		\Omega := \sqrt{\frac{\beta}{\alpha}} \cdot
		\frac{\bar m}{M} < 1.
		\end{equation}
		Adjust $\mu$ by $\mu_{k} = \Omega\mu_{k-1}$
		when $q_{\mu_{k-1}}(x_k)$ gets to 
		$\mathbf{B}_{\mu_{k-1} \tilde{m}}$,
		where $\tilde m := m + \sqrt{\sf n}\Delta$,
		and send to the controller the quantized state $q_{\mu_k}(x_k)$ at time $k$.
		This event-based update strategy of $\mu$
		leads to $x_k \to 0~(k \to \infty)$.
	}
\end{theorem}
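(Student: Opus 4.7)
The plan is to mirror the proof of Theorem~\ref{thm:Input_convergence}, adapted to the state-quantization case in which the trigger for updating $\mu$ is based on the \emph{quantized} state rather than the true state.

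First I would verify that as long as the quantizer does not saturate, i.e., $|x_k|\leq\mu_k M$, the state remains in $\mathbf{X}$ and the Lyapunov inequality~\eqref{eq:diff_VjVi} applies, exactly as in the opening argument of Theorem~\ref{thm:Input_convergence}. The zoomed error satisfies $|q_{\mu_k}(x_k)-x_k|_{\infty}\leq\mu_k\Delta\leq\Delta$ by~\eqref{eq:quantizer_cond1} and~\eqref{eq:zoom_q} together with $\mu_k\leq 1$, so $B_iK_i(q_{\mu_k}(x_k)-x_k)\in\mathcal{B}_i$. Combined with Assumption~\ref{ass:state_ver} this shows $x_{k+1}\in\mathcal{R}_i\oplus\mathcal{B}_i\subset\mathbf{X}$ and gives some $j\in\mathcal{S}_i$ with $x_{k+1}\in\mathcal{X}_j$, so~\eqref{eq:diff_VjVi} can be invoked.

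Next I would analyze the update events. At an update time $\ell_0$ the trigger condition $q_{\mu_{\ell_0-1}}(x_{\ell_0})\in\mathbf{B}_{\mu_{\ell_0-1}\tilde m}$ combined with the quantization bound $|x_{\ell_0}-q_{\mu_{\ell_0-1}}(x_{\ell_0})|\leq\sqrt{{\sf n}}\,\mu_{\ell_0-1}\Delta$ yields $|x_{\ell_0}|\leq\mu_{\ell_0-1}(\tilde m+\sqrt{{\sf n}}\,\Delta)=\mu_{\ell_0-1}\bar m$. The definition of $\Omega$ in~\eqref{eq:Omega_def_state_case} then gives
\[
V(x_{\ell_0})\leq\beta|x_{\ell_0}|^2\leq\beta(\mu_{\ell_0-1}\bar m)^2=\alpha(\Omega\mu_{\ell_0-1}M)^2=\alpha(\mu_{\ell_0}M)^2,
\]
so $x_{\ell_0}$ lies in the rescaled level set $\{x:V(x)\leq\alpha(\mu_{\ell_0}M)^2\}$. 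On each inter-update interval $[\ell_p,\ell_{p+1}]$ the zoom parameter is fixed at $\mu_{\ell_p}$, so I would apply Theorem~\ref{thm:Mtom} with $\Delta,M,m,\tilde m,\bar m$ replaced by the rescaled quantities $\mu_{\ell_p}\Delta,\mu_{\ell_p}M,\ldots$: starting from $x_{\ell_p}\in\mathcal{E}_{\mu_{\ell_p}M}$, the state is kept inside this rescaled level set by~\eqref{eq:invariant_cond_state_ver}, the quantizer never saturates, and within at most $\bar k_0$ steps $q_{\mu_{\ell_p}}(x_k)\in\mathbf{B}_{\mu_{\ell_p}\tilde m}$ triggers the next update. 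Therefore updates occur at least once every $\bar k_0$ steps, $\mu_k$ decays at least geometrically to $0$, and $|x_k|\leq\mu_k M$ forces $x_k\to 0$.

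The main obstacle is the rescaling step in the previous paragraph. The decrease threshold $m_{i,j}^2=\phi_{1,ij}\Delta^2+\phi_{2,ij}\Delta$ is not purely quadratic in $\Delta$, because the $\phi_{2,ij}$ term originates from the affine part $h_{ij}$ of the piecewise quadratic Lyapunov function~\eqref{eq:gij_def}. Since $\mu_{\ell_p}\leq 1$, however, the rescaled quantity $\phi_{1,ij}(\mu_{\ell_p}\Delta)^2+\phi_{2,ij}(\mu_{\ell_p}\Delta)$ is bounded above by the original $m_{i,j}^2$, so the Lyapunov decrement~\eqref{eq:Lyapunov_decrease} is preserved and the state is still driven into $\mathbf{B}_{\mu_{\ell_p} m}$ within $\bar k_0$ steps. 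The asymptotic analysis closes cleanly because once $\mu_k$ is small enough, the trajectory only visits modes $i$ with $0\in\cl(\mathcal{X}_i)$, on which $f_i=g_i=0$ and the scaling becomes purely quadratic.
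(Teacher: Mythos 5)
Your core argument is the same as the paper's: the paper's proof of Theorem \ref{thm:state_convergence} consists precisely of your second paragraph's computation (the trigger condition plus the $\ell_\infty$-quantization bound gives $|x_{\ell_0}|\le \mu_{\ell_0-1}\bar m$, hence $V(x_{\ell_0})\le\beta(\mu_{\ell_0-1}\bar m)^2=\alpha(\mu_{\ell_0}M)^2$ by \eqref{eq:Omega_def_state_case}), followed by a deferral to the proof of Theorem \ref{thm:Input_convergence} for everything else, which your first paragraph reproduces faithfully.

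The problem is in your final paragraph. From $\phi_{1,ij}(\mu\Delta)^2+\phi_{2,ij}(\mu\Delta)\le m_{ij}^2$ for $\mu\le 1$ you conclude that the state is driven into $\mathbf{B}_{\mu_{\ell_p}m}$; this is a non sequitur. That inequality only guarantees the Lyapunov decrement for $|x_k|$ above the \emph{rescaled} threshold $\bigl(\phi_{1,ij}(\mu\Delta)^2+\phi_{2,ij}\mu\Delta\bigr)^{1/2}$, and for $\mu<1$ this threshold is \emph{larger} than $\mu m_{ij}$ (since $\phi_{2,ij}\mu\Delta\ge\mu^2\phi_{2,ij}\Delta$), so the argument only drives the state into a ball of radius of order $\sqrt{\mu}$, not $\mu m$. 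Consequently you have not shown that $q_{\mu_{\ell_p}}(x_k)$ reaches $\mathbf{B}_{\mu_{\ell_p}\tilde m}$ within $\bar k_0$ steps, i.e., that the next update is ever triggered --- which is exactly the step the whole induction rests on. You correctly identified the source of the trouble (the affine term $h_{ij}$ makes $m_{ij}$ non-homogeneous in $\Delta$), and your suggested resolution --- that near the origin only modes with $0\in\cl(\mathcal{X}_i)$, hence $f_i=g_i=0$, are visited --- is a plausible direction, but it is asserted rather than proved: you would need to show that after finitely many updates the trajectory and its one-step images remain in such modes (and that $h_{ij}$ actually vanishes there, which requires $0\in\cl(\mathcal{X}_j)$ as well), and you would still have to get through the intermediate updates where the homogeneous scaling is not yet available. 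For what it is worth, the paper's own proof (and that of Theorem \ref{thm:Input_convergence}, to which it defers) silently applies Theorem \ref{thm:Mtom} with rescaled parameters and never addresses this point, so you have found a genuine subtlety; but your patch as written does not close it.
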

\begin{proof}
	If we observe $q_{k-1}(x_k) \in \mathbf{B}_{\mu_{k-1}\tilde m}$ 
	at time $k$,
	then $x_k \in \mathbf{B}_{\mu_{k-1}\bar m}$,
	where $\bar m := m + 2 \sqrt{\sf n} \Delta$. 
	Hence we obtain $x_k \in 
	\mathcal{E}_{\mu_kM}$ after the update $\mu_k = \Omega \mu_{k-1}$.
	The other part of the proof follows in the same line 
	as that of Theorem \ref{thm:Input_convergence},
	so we omit it.
\end{proof}

\begin{remark}
	Another approach to stabilize the PWA system with the quantized state 
	feedback is
	to combine the plant and the quantizer.
	In this case, we consider the following PWA system:
	\begin{align}
	x_{k+1} 
	&= A_ix_k + B_i u_k + f_i& \quad &(x_k  \in \mathcal{X}_i) \notag \\
	y_k &= q_j& \quad &(x_k  \in \mathcal{Q}_j) \label{eq:PWA_Quantizer} \\
	u_k &= K_i y_k + g_i& \quad &(x_k  \in \mathcal{X}_i). \notag
	\end{align}
	The difficulty of this approach is that
	we need to stabilize PWA systems with output feedback $y_k = q_j$.
	Output feedback stabilization of PWA systems has been
	studied in \cite{Qiu2011} and the reference therein, but
	the output structure in these previous works is $y_k = C_i x_k$.
	In general, it is difficult to design stabilizing controllers 
	for the system \eqref{eq:PWA_Quantizer}.
	Moreover, if we adjust the quantizer, then the system 
	\eqref{eq:PWA_Quantizer} becomes time varying.
	To avoid technical issues, we do not proceed along this
	lines.
\end{remark}

\subsection{Strategy in Controller}
As in \cite[Section 7.2]{Liberzon2014},
a better quantization value can be computed 
in the controller side if the state is near switching boundaries.
For the recompution of a new quantization value, we make the following assumption:
\begin{assumption}
	\label{ass:QRP}
	{\em
		The controller has the information on 
		the switching regions $\{\mathcal{X}_i\}_{i \in \mathcal{S}}$.
		All quantization regions $\mathcal{Q}_j$ are polyhedra.
	}
\end{assumption}

If the quantized state $q(x_k)$ is in a quantization region that has no
switching boundary, then the controller uses $q(x_k)$.
On the other hand, 
in order to achieve better performance, 
if the corresponding quantization region contains a switching boundary,
then the controller can generate a new quantized value from the information on
the quantized state and
the currently active mode as follows.

Let the switching region corresponding to the active mode 
be $\mathcal{X}_i$ and let
the quantization region of the transmitted quantized state
be $\mathcal{Q}_j$.
Then the state belongs to $\mathcal{X}_i \cap \mathcal{Q}_j$.
Suppose that $\mathcal{X}_i \cap \mathcal{Q}_j$
is bounded. Otherwise, the controller does not
recompute a new quantization value.
Since both regions are polyhedra,
$\mathcal{X}_i \cap \mathcal{Q}_j$ is a polyhedron.
Let us denote its closure by $\mathcal{A}$ .

Since $x \in \mathcal{A}$,
the controller computes a new quantized state 
\begin{equation*}
q_{\text{new}} := 
\argmin_{\xi \in \mathbb{R}^{\sf n}} \max_{x \in \mathcal{A}}
|\xi - x|_{\infty},
\end{equation*}
which is the Chebyshev center of $\mathcal{A}$.

The next theorem shows that $q_{\text{new}}$ can be obtained by
linear programming
and that the quantization error by using 
$q_{\text{new}}$ as the new quantized state 
is always less than
or equal to the quantization level $\Delta$ in \eqref{eq:quantizer_cond1}.
\begin{theorem}
	\label{thm:new_quntize}
	{\em
		Let the vertices of $\mathcal{A}$ be $v_1,\dots,v_{\ell}$.
		The new quantization value $q_{\text{\em new}}$ 
		is computed by the following linear program:
		\begin{align}
		&\text{Minimize $\delta \geq 0$ such that there exists
			$\xi \in \mathbb{R}^{\sf n}$ satisfying} \notag \\
		&~~ \text{$\xi - v_i \leq \delta {\bf 1}$ and 
			$\xi - v_i \geq - \delta {\bf 1}$ for all $i=1,\dots,\ell$.}
		\label{eq:q_new_LP}
		\end{align}
		Moreover, if $|x| < M$, then $q_{\text{\em new}}$ satisfies 
		\[
		\max_{x \in \mathcal{A}} |q_{\text{\em new}} - x|_{\infty}
		\leq \Delta.
		\]
	}
\end{theorem}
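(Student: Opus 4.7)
The plan is to split the theorem into two natural parts: (i) recognise the Chebyshev-centre problem defining $q_{\text{new}}$ as the linear program (\ref{eq:q_new_LP}), and (ii) verify the uniform error bound $\max_{x \in \mathcal{A}} |q_{\text{new}} - x|_{\infty} \leq \Delta$.

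For part (i), the key observation is that $\mathcal{A}$ is a polytope: it is $\cl(\mathcal{X}_i \cap \mathcal{Q}_j)$, which is a polyhedron by Assumption~\ref{ass:QRP} and is assumed bounded in the preceding discussion, so it equals the convex hull of its vertices $v_1,\dots,v_\ell$. For each fixed $\xi \in \mathbb{R}^{\sf n}$, the map $x \mapsto |\xi - x|_{\infty}$ is convex, so its maximum on $\mathcal{A}$ is attained at a vertex, giving
\begin{equation*}
\max_{x \in \mathcal{A}} |\xi - x|_{\infty} \;=\; \max_{1 \leq \nu \leq \ell} |\xi - v_\nu|_{\infty}.
\end{equation*}
Componentwise, $|\xi - v_\nu|_{\infty} \leq \delta$ is equivalent to $-\delta \mathbf{1} \leq \xi - v_\nu \leq \delta \mathbf{1}$, which is exactly the constraint set of (\ref{eq:q_new_LP}). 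The epigraph form of the Chebyshev-centre problem is therefore the linear program of minimising $\delta$ in $(\xi,\delta)$ subject to these inequalities, and its optimal $\xi$ is $q_{\text{new}}$ by definition.

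For part (ii), I would exploit the competitive optimality of $q_{\text{new}}$ against the original codeword $q_j$. Every $x \in \mathcal{A}$ lies in $\cl(\mathcal{Q}_j)$, so $q(x) = q_j$. Reading the hypothesis $|x| < M$ as holding uniformly on $\mathcal{A}$ (the strict inequality extends to the closure because the quantiser bound (\ref{eq:quantizer_cond1}) is itself non-strict), property (\ref{eq:quantizer_cond1}) gives $|q_j - x|_{\infty} \leq \Delta$ for every $x \in \mathcal{A}$. Hence $\xi = q_j$ is feasible for the Chebyshev-centre problem with objective value at most $\Delta$, and since $q_{\text{new}}$ is a minimiser,
\begin{equation*}
\max_{x \in \mathcal{A}} |q_{\text{new}} - x|_{\infty} \;\leq\; \max_{x \in \mathcal{A}} |q_j - x|_{\infty} \;\leq\; \Delta.
\end{equation*}

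The main obstacle is conceptual rather than computational: the statement overloads the symbol $x$ by using it both as the true state and as the dummy variable inside the supremum, and one must be careful to read $|x| < M$ as a uniform condition on $\mathcal{A}$ so that (\ref{eq:quantizer_cond1}) applies at every point of $\mathcal{A}$. Once this is pinned down, the LP reformulation is routine (a convex function maximised over a polytope attains its maximum at a vertex), and the error bound follows purely from the optimality of $q_{\text{new}}$ against the witness $q_j$.
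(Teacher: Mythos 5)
Your proposal is correct and follows essentially the same route as the paper: part (i) is the same vertex-maximization fact (which the paper proves separately in its Appendix), and part (ii) is the same competitor argument using the original codeword $q_j$ together with $\mathcal{A} \subset \cl(\mathcal{Q}_j)$ and the quantizer bound \eqref{eq:quantizer_cond1}. Your explicit remarks on the overloaded symbol $x$ and the strict-versus-nonstrict reading of $|x| < M$ address minor imprecisions that the paper's own proof glosses over.
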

\begin{proof}
	It is well known that 
	for every $\xi \in \mathbb{R}^n$,
	$\max_{x \in \mathcal{A}}
	|\xi - x|_{\infty} = \max_{z \in \{v_1,\dots,v_{\ell}\}}
	|\xi - x|_{\infty}$; see also Appendix.
	Hence the linear program \eqref{eq:q_new_LP} gives $q_{\text{new}}$.
	
	Since $\mathcal{A} \subset \cl ( \mathcal{Q}_j)$, it follows 
	from \eqref{eq:quantizer_cond1}
	that if $|x| \leq M$, then
	\begin{align*}
	\max_{x \in \mathcal{A}} |q_{\text{new}} - x|_{\infty}
	&=
	\min_{\xi \in \mathbb{R}^{\sf n}}
	\max_{x \in \mathcal{A}} |\xi - x|_{\infty} \\
	&\leq
	\min_{\xi \in \mathbb{R}^{\sf n}}
	\max_{x \in  \cl ( \mathcal{Q}_j)} |\xi - x|_{\infty} \\
	&\leq
	\max_{x \in  \cl ( \mathcal{Q}_j)} |q_j- x|_{\infty}
	\leq \Delta,
	\end{align*}
	where $q_j$ is the original quantization value of $\mathcal{Q}_j$.
\end{proof}
\begin{remark}
	{\bf (a)}
	If the original quantization region $\mathcal{Q}_j$ is a polyhedron, then
	the zoomed-in quantization region 
	$\{ x \in \mathbb{R}^{\sf n} :~ q_{\mu}(x) = \mu q_j\}$
	is also a polyhedron.
	We can therefore compute the new quantization value $q_{\text{new}}$
	after adjusting the zoom parameter $\mu$ as well.
	
	{\bf (b)}
	The use of $q_{\text{new}}$
	does not affect the stability analysis in Theorems \ref{thm:Mtom}
	and \ref{thm:state_convergence},
	because its quantization error does not exceed $\Delta$.
	To obtain $q_{\text{new}}$, 
	we need to solve the linear program \eqref{eq:q_new_LP}.
	If the computation is not finished by
	the time when the control input is generated,
	then the controller can use the original quantization value $q_j$.
\end{remark}

\section{Controller Synthesis for PWA systems with Bounded Disturbance}
For quantized control,
here we aim to find a feedback gain $K_i$ 
and an affine term $g_i$ satisfying
\eqref{eq:Lyapunov_bound} and \eqref{eq:diff_VjVi} for every $i \in \mathcal{S}$, 
$j \in \mathcal{S}_i$, and $x \in \mathcal{X}_i$.
To this effect, we show how to obtain a set containing 
$S_i$ in \eqref{eq:Sei_def}
with less conservatism.

\subsection{Difficulty of controller synthesis for PWA systems}
Let us consider discrete-time PWA systems \eqref{eq:PWAS}
with affine state feedback control \eqref{eq:input} under
no quantization.
Theorem 1 in \cite{Ferrari2002} shows that
in order to stabilize the PWA system \eqref{eq:PWAS},
it is enough to find a feedback gain
$K_i$ and an affine term $g_i$ for every $i \in \mathcal{S}$ such that
$(A_i + B_iK_i)x + f_i+B_ig_i 
\in \mathbf{X}$ ($x \in \mathcal{X}_i$) and
the piecewise Lyapunov function 
$V(x)$
satisfies \eqref{eq:Lyapunov_bound} and
\begin{gather}
V((A_i \!+\! B_iK_i)x \!+\! f_i+B_ig_i) \!-\! V(x) \leq -\gamma  |x|^2
\quad (x \in \mathcal{X}_i) \label{eq:Lyapunov_diff}
\end{gather}
for some
$\alpha, \beta, \gamma  >0$.

Define $V(x) := V_i(x)$ ($x \in \mathcal{X}_i$),
with a function $V_i:\mathcal{X}_i \to \mathbb{R}$.
The sufficient condition of \eqref{eq:Lyapunov_diff}
used for the stability analysis in \cite{Ferrari2002, Feng2002} is that
\begin{equation}
\label{eq:diff_VjVi_difficulty}
V_j((A_i + B_i K_i)x+f_i+B_ig_i) - V_i(x) \leq -\gamma |x|^2
\end{equation}
for all $x \in \mathcal{X}_i$ and $j \in \mathcal{S}$ with
$\mathcal{X}_j \cap \mathcal{R}_i \not = 0$, where
$\mathcal{R}_i$ is the one-step reachable set defined in \eqref{eq:reachable_set_def}.
However, 
it is generally difficult to obtain 
$K_i$ and $g_i$ satisfying this condition in a less conservative way.
This is because $j$, namely, 
the polyhedron to which $(A_i + B_i K_i)x+f_i+B_ig_i$ may belong is
dependent of the unknown variables $K_i$, $g_i$.
To circumvent this difficulty, 
it is assumed, e.g., in \cite{Cuzzola2002, Lazar2008, Xu2014} that
the state can reach every polyhedron in one step, but 
this assumption makes the controller synthesis conservative 
if disturbances are bounded. In addition to that, checking 
the condition \eqref{eq:diff_VjVi_difficulty} for
every pair $(i,j)$ leads to
computational complexity for PWA systems with large number of modes.
Therefore the objective here is to obtain a set to which 
the state go in one step under bounded disturbance.

\subsection{One-step reachable set for PWA systems with bounded disturbances}
Consider 
a PWA system with bounded disturbances given by
\begin{align}
x_{k+1} 
&= A_ix_k + B_i K_i x_k + f_i + B_ig_i + D_i d_k 
\qquad
(x_k  \in \mathcal{X}_i) \notag \\
&= G(x_k,K_ix_k+g_i) + D_i d_k, 
\label{eq:PWA_disturbance}
\end{align}
where the disturbance $d_k$ satisfies
$d_k \in \mathbf{B}_{\Delta}^{\infty}
=
\{d \in \mathbb{R}^{\sf d}:~ |d|_{\infty} \leq \Delta\}$
for all $k \geq 0$.
The next lemma gives a motivation of studying the set $\mathcal{S}_i$ 
defined in \eqref{eq:Sei_def} in terms of practical input-state-stability
in addition to quantized control in the previous sections.
A proof is provided for completeness.
\begin{lemma}
	{\it
		Let $\Delta > 0$.
		Define 
		$\mathcal{R}_i := \{ G_i(x,K_ix+g_i): x\in \mathcal{X}_i\}$ and
		$\mathcal{B}_{i} := \{
		D_i d : |d|_{\infty} \leq \Delta
		\}$.
		For every $i \in \mathcal{S}$, assume that
		$
		\mathcal{R}_i \oplus \mathcal{B}_i
		\subset {\bf X}
		$.
		If the piecewise Lyapunov function 
		$V(x) := V_i(x)$ ($x \in \mathcal{X}_i$),
		with a function $V_i:\mathcal{X}_i \to \mathbb{R}$, satisfies 
		\eqref{eq:Lyapunov_bound} for some $\alpha , \beta > 0$
		and there exist
		$\gamma > 0$ and $\rho > 0$ such that
		for every $i \in \mathcal{S}$ and
		$j \in \mathcal{S}_i$ and for every $x \in \mathcal{X}_i$ and 
		$d \in \mathbf{B}_{\Delta}^{\infty}$,
		\begin{equation}
		\label{eq:diff_VjVi_ISS}
		V_j((A_i+B_iK_i)x+f_i+B_ig_i + D_i d) - V_i(x) \leq -\gamma |x|^2 + 
		\rho \Delta^2,
		\end{equation}
		then we have 
		\begin{equation}
		\label{eq:ISSprop}
		|x_k|^2 \leq 
		\frac{\beta}{\alpha} (1-\epsilon)^k |x_0|^2
		+ \frac{\rho}{\alpha \epsilon} \Delta^2,
		\end{equation}
		where 
		$\epsilon :=  \gamma / \beta$.
	}
\end{lemma}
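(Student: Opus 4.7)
The plan is to combine the quadratic upper bound on $V$ with the one-step decrease inequality \eqref{eq:diff_VjVi_ISS} to obtain a contractive recursion on $V(x_k)$, then telescope it as a geometric series and translate back to $|x_k|^2$ via the bounds \eqref{eq:Lyapunov_bound}. This is the standard Lyapunov route to practical ISS, so the proof should be short; the only subtlety is to verify that \eqref{eq:diff_VjVi_ISS} is actually applicable at every step, which requires checking the mode-transition condition hidden in the definition of $\mathcal{S}_i$.

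First I would carry out the mode-transition step. Fix $k\geq 0$ and suppose $x_k \in \mathcal{X}_i$. By construction of \eqref{eq:PWA_disturbance} we have $x_{k+1} = G_i(x_k,K_ix_k+g_i) + D_id_k \in \mathcal{R}_i \oplus \mathcal{B}_i$, which by hypothesis lies in $\mathbf{X}$. Since $\{\mathcal{X}_j\}_{j\in\mathcal{S}}$ partitions $\mathbf{X}$, there exists $j\in\mathcal{S}$ with $x_{k+1}\in\mathcal{X}_j$, and then $\mathcal{X}_j \cap (\mathcal{R}_i\oplus\mathcal{B}_i)\ne\emptyset$, so $j\in\mathcal{S}_i$ by \eqref{eq:Sei_def}. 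This guarantees that \eqref{eq:diff_VjVi_ISS} may be applied at time $k$, giving
\begin{equation*}
V(x_{k+1}) - V(x_k) \leq -\gamma|x_k|^2 + \rho\Delta^2.
\end{equation*}

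Next I would use \eqref{eq:Lyapunov_bound} to convert the $-\gamma|x_k|^2$ term into a contraction on $V$. From $V(x_k)\leq \beta|x_k|^2$ we get $-\gamma|x_k|^2 \leq -(\gamma/\beta)V(x_k) = -\epsilon V(x_k)$, so
\begin{equation*}
V(x_{k+1}) \leq (1-\epsilon)V(x_k) + \rho\Delta^2.
\end{equation*}
Iterating this recursion from $k=0$ and summing the geometric series yields
\begin{equation*}
V(x_k) \leq (1-\epsilon)^k V(x_0) + \rho\Delta^2 \sum_{\ell=0}^{k-1}(1-\epsilon)^\ell \leq (1-\epsilon)^k V(x_0) + \frac{\rho\Delta^2}{\epsilon}.
\end{equation*}
Finally, bounding $V(x_0)\leq\beta|x_0|^2$ on the right and $V(x_k)\geq\alpha|x_k|^2$ on the left gives exactly \eqref{eq:ISSprop}.

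The only step that merits a second look is the applicability of \eqref{eq:diff_VjVi_ISS} at every $k$, since the hypothesis is stated pointwise over $(i,j)$ with $j\in\mathcal{S}_i$; this is dispatched by the mode-transition argument above, which relies crucially on the invariance condition $\mathcal{R}_i\oplus\mathcal{B}_i\subset\mathbf{X}$. One should also note that $\epsilon=\gamma/\beta\in(0,1]$ follows from $\gamma|x|^2 \leq V(x_k) - V(x_{k+1}) + \rho\Delta^2$ evaluated for bounded trajectories and the bound $V\leq\beta|x|^2$; if $\epsilon\geq 1$ the recursion still holds with $(1-\epsilon)^k$ interpreted nonnegatively by truncation, but in the standard regime $\epsilon<1$ and no extra care is required. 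Beyond this, the argument is a direct calculation and I do not anticipate any further obstacle.
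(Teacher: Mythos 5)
Your proof is correct and follows essentially the same route as the paper's: the same mode-transition argument via $\mathcal{R}_i \oplus \mathcal{B}_i \subset \mathbf{X}$ to justify applying \eqref{eq:diff_VjVi_ISS}, the same contraction $V(x_{k+1}) \leq (1-\epsilon)V(x_k) + \rho\Delta^2$, and the same geometric-series telescoping followed by the two-sided bounds in \eqref{eq:Lyapunov_bound}. Your closing remark on the applicability of \eqref{eq:diff_VjVi_ISS} at every step is in fact slightly more explicit than the paper's own treatment, but the substance is identical.
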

\begin{proof}
	Since $x_{k+1} \in \mathcal{R}_i \oplus \mathcal{B}_i
	\subset {\mathbf{X}}$
	for all $x_k \in \mathcal{X}_i$, it follows that if $x_k \in \mathcal{X}_i$,
	then $x_{k+1} \in \mathcal{X}_j$ for some $j \in \mathcal{S}_i$.
	Therefore \eqref{eq:Lyapunov_bound} and \eqref{eq:diff_VjVi_ISS} give
	\begin{align*}
	V(x_{k+1}) \leq (1-\epsilon) V(x_{k}) + \rho \Delta^2,
	\end{align*}
	and hence
	\begin{align}
	\label{eq:xk_bound}
	V(x_k) \leq (1-\epsilon)^k V(x_0) + \frac{\rho}{\epsilon} \Delta^2.
	\end{align}
	Using \eqref{eq:Lyapunov_bound} again, we obtain \eqref{eq:ISSprop}
	from \eqref{eq:xk_bound}.
\end{proof}

\subsubsection{One-step reachable set with known $K_i$ and $g_i$}
First we study the case when $K_i$ and $g_i$ are known.
The lemma below gives a condition equivalent to 
$\mathcal{X}_j  \cap \bigl(\mathcal{R}_i \oplus \mathcal{B}_{i}\bigr)
\not = \emptyset$ in the definition \eqref{eq:Sei_def}
of $\mathcal{S}_i$.
\begin{lemma}
	\label{lem:min_sum_trans}
	{\em
		Define $
		\mathcal{B} := \{
		D d : |d|_{\infty} \leq \Delta
		\}$.
		For arbitrary sets $\mathcal{M}_1, \mathcal{M}_2 \subset \mathbb{R}^{\sf n}$,
		we have
		\begin{equation*}
		\left(\mathcal{M}_1 \oplus \mathcal{B}\right) \cap \mathcal{M}_2 \not= \emptyset
		\quad \Leftrightarrow \quad
		\mathcal{M}_1 \cap \left( \mathcal{M}_2 \oplus \mathcal{B} \right) \not=\emptyset.
		\end{equation*}
	}
\end{lemma}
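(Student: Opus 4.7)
The plan is to reduce both sides of the equivalence to the same existential statement about a triple $(m_1, b, m_2) \in \mathcal{M}_1 \times \mathcal{B} \times \mathcal{M}_2$ satisfying a single linear relation, using the fact that the set $\mathcal{B}$ is symmetric about the origin.

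First I would observe the key symmetry: $\mathcal{B} = -\mathcal{B}$. Indeed, any $b \in \mathcal{B}$ has the form $b = Dd$ with $|d|_\infty \leq \Delta$, and then $-b = D(-d)$ with $|-d|_\infty = |d|_\infty \leq \Delta$, so $-b \in \mathcal{B}$; the reverse inclusion follows identically. Hence for any set $\mathcal{M}$, $\mathcal{M} \oplus \mathcal{B} = \mathcal{M} \oplus (-\mathcal{B})$.

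Next I would unfold both sides. The left-hand side $(\mathcal{M}_1 \oplus \mathcal{B}) \cap \mathcal{M}_2 \neq \emptyset$ is equivalent to the existence of $m_1 \in \mathcal{M}_1$, $b \in \mathcal{B}$, and $m_2 \in \mathcal{M}_2$ with $m_1 + b = m_2$. Rewriting this as $m_1 = m_2 + (-b)$ and invoking the symmetry $-b \in \mathcal{B}$, I obtain a point $m_1 \in \mathcal{M}_1$ that simultaneously lies in $\mathcal{M}_2 \oplus \mathcal{B}$, which is exactly $\mathcal{M}_1 \cap (\mathcal{M}_2 \oplus \mathcal{B}) \neq \emptyset$. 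The reverse implication is proved by the same chain of equalities run backwards, swapping the roles of $b$ and $-b$.

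No step looks especially delicate; the only point worth stating carefully is the symmetry of $\mathcal{B}$, since the claim would be false for an asymmetric perturbation set. The proof is therefore short and essentially a one-line rearrangement once that symmetry is noted.
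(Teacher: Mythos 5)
Your proposal is correct and matches the paper's own argument: both proofs hinge on the symmetry $-Dd \in \mathcal{B}$ and the rearrangement $m_1 = \xi_1 - Dd \in \mathcal{M}_2 \oplus \mathcal{B}$. Your explicit remark that the claim fails for asymmetric perturbation sets is a nice addition, but the route is the same.
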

\vspace{6pt}
\begin{proof}
	It suffices to show that if there exists $\xi_1 \in \mathbb{R}^{\sf n}$ satisfying
	$\xi_1 \in \left(\mathcal{M}_1 \oplus \mathcal{B} \right) \cap \mathcal{M}_2$,
	then we have $\xi_2 \in \mathbb{R}^{\sf n}$ such that
	\begin{equation}
	\label{eq:B_transit_conclusion}
	\xi_2 \in \mathcal{M}_1 \cap \left( \mathcal{M}_2 \oplus \mathcal{B} \right).
	\end{equation}
	
	Since $\xi_1 \in \left(\mathcal{M}_1 \oplus \mathcal{B}\right) \cap \mathcal{M}_2$,
	it follows that
	$
	\xi_1 = m_1 + Dd
	$
	for some $m_1 \in \mathcal{M}_1$ and for some
	$d \in \mathbf{B}_{\Delta}^{\infty}$,
	and also that $\xi_1 \in \mathcal{M}_2$.
	Moreover, since $-Dd \in \mathcal{B}$, we have
	\[
	m_1 = \xi_1 - Dd \in \mathcal{M}_2 \oplus \mathcal{B}.
	\]
	The desired conclusion \eqref{eq:B_transit_conclusion} 
	holds with $\xi_2 = m_1$.
\end{proof}

We see from Lemma \ref{lem:min_sum_trans} that
$\mathcal{X}_j \cap
\bigl(\mathcal{R}_i \oplus \mathcal{B}_i \bigr)
\not = 0$ is equivalent to
$\mathcal{R}_i 
\cap \bigl(\mathcal{X}_j  \oplus \mathcal{B}_i \bigr) \not = \emptyset$.
Therefore $\mathcal{S}_i$ in \eqref{eq:Sei_def}
satisfies
\begin{equation*}
\mathcal{S}_i =
\left\{
j \in \mathcal{S}:~
\mathcal{R}_i 
\cap \bigl(\mathcal{X}_j  \oplus \mathcal{B}_i\bigr) 
\not = \emptyset
\right\}.
\end{equation*} 

The following theorem gives a set containing $\mathcal{S}_i$,
which can be obtained by linear programing:
\begin{theorem}
	\label{lem:Set_containing_Si}
	{\em
		Using suitable $U_i$ and $v_i$, we can write
		the closure of $\mathcal{X}_i$ as
		\begin{equation}
		\label{eq:X_j_rep}
		\cl (\mathcal{X}_i) = \{ x:~U_i x \leq v_i\}
		\qquad (i \in \mathcal{S}).
		\end{equation}
		Define $\mathcal{S}_i$ as in \eqref{eq:Sei_def}.
		If we define $\bar{\mathcal{S}}_i$ by
		\begin{align}
		\bar{\mathcal{S}}_i &:=
		\bigl\{
		j \in \mathcal{S}:~
		U_i x \leq v_i,~~
		d \leq \Delta {\bf 1},~~
		d \geq -\Delta {\bf 1}, \notag \\
		&\qquad \qquad 
		\text{and~~}
		U_j((A_i+B_iK_i) x + f_i + B_ig_i - D_i d) \leq v_j \notag \\
		&\qquad \qquad 
		\text{for some~~} x \in \mathbb{R}^{\sf n}
		\text{~and~} 
		d \in \mathbb{R}^{\sf d}
		\bigr\}
		\label{eq:bSei}
		\end{align}
		then $\mathcal{S}_i \subset \bar{\mathcal{S}}_i$.
	}
\end{theorem}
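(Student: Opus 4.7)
My plan is to unpack the definition of $\mathcal{S}_i$, apply Lemma \ref{lem:min_sum_trans} to swap the roles of $\mathcal{R}_i$ and $\mathcal{X}_j$ in the Minkowski sum, and then read off the linear constraints that appear in \eqref{eq:bSei}.

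Fix $j \in \mathcal{S}_i$. By definition, $\mathcal{X}_j \cap (\mathcal{R}_i \oplus \mathcal{B}_i) \neq \emptyset$. With $\mathcal{M}_1 = \mathcal{R}_i$, $\mathcal{M}_2 = \mathcal{X}_j$, and $\mathcal{B} = \mathcal{B}_i = \{D_i d : |d|_\infty \leq \Delta\}$, Lemma \ref{lem:min_sum_trans} yields
\begin{equation*}
\mathcal{R}_i \cap (\mathcal{X}_j \oplus \mathcal{B}_i) \neq \emptyset.
\end{equation*}
Hence there exist $x \in \mathcal{X}_i$, $y \in \mathcal{X}_j$, and $d \in \mathbb{R}^{\sf d}$ with $|d|_\infty \leq \Delta$ such that
\begin{equation*}
(A_i + B_iK_i)x + f_i + B_ig_i \;=\; y + D_i d,
\end{equation*}
i.e., $y = (A_i + B_iK_i)x + f_i + B_ig_i - D_i d$.

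Now I verify all the linear inequalities in the definition of $\bar{\mathcal{S}}_i$. Since $\mathcal{X}_i \subset \cl(\mathcal{X}_i)$, the representation \eqref{eq:X_j_rep} gives $U_i x \leq v_i$. The bound $|d|_\infty \leq \Delta$ is exactly the pair of componentwise inequalities $-\Delta \mathbf{1} \leq d \leq \Delta \mathbf{1}$. Likewise, $y \in \mathcal{X}_j \subset \cl(\mathcal{X}_j)$ combined with \eqref{eq:X_j_rep} gives
\begin{equation*}
U_j\bigl((A_i + B_iK_i)x + f_i + B_ig_i - D_i d\bigr) = U_j y \leq v_j.
\end{equation*}
Thus $x$ and $d$ exhibit the feasibility required in \eqref{eq:bSei}, so $j \in \bar{\mathcal{S}}_i$. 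Since $j \in \mathcal{S}_i$ was arbitrary, $\mathcal{S}_i \subset \bar{\mathcal{S}}_i$, completing the proof.

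The argument is essentially bookkeeping; the only substantive step is the application of Lemma \ref{lem:min_sum_trans}, which converts the geometric nonemptiness condition into a form whose witness $y \in \mathcal{X}_j$ can be expressed directly in terms of the affine-feedback image of $x$ perturbed by $D_i d$. No nontrivial obstacle arises, and the resulting conditions on $(x,d)$ are linear, so $\bar{\mathcal{S}}_i$ can indeed be computed by checking feasibility of a linear program for each candidate $j \in \mathcal{S}$.
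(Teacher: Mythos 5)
Your proof is correct and follows essentially the same route as the paper: both invoke Lemma \ref{lem:min_sum_trans} to rewrite $\mathcal{X}_j \cap (\mathcal{R}_i \oplus \mathcal{B}_i) \neq \emptyset$ as $\mathcal{R}_i \cap (\mathcal{X}_j \oplus \mathcal{B}_i) \neq \emptyset$, and then unpack the witness $y = (A_i+B_iK_i)x + f_i + B_ig_i - D_i d \in \mathcal{X}_j \subset \cl(\mathcal{X}_j)$ into the linear inequalities defining $\bar{\mathcal{S}}_i$. Your version is, if anything, slightly more explicit about naming the witnesses $x$, $y$, $d$ and checking each constraint.
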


\begin{proof}
	First of all, we see that
	there exists $x \in \mathbb{R}^{\sf n}$ satisfying both $x \in \mathcal{R}_i$ and 
	$x \in \mathcal{X}_j \oplus \mathcal{B}_{i}$ if and only if
	there exists $x \in \mathcal{X}_i$ such that 
	$\bar{A}_i x + \bar f_i \in 
	\mathcal{X}_j \oplus \mathcal{B}_{i}$, where
	$\bar A_i := A_i + B_i K_i$ and $\bar f_i := f_i + B_ig_i$.
	
	By definition,
	$\bar{A}_i x + \bar f_i \in 
	\cl (\mathcal{X}_j) \oplus \mathcal{B}_{i}$ is equivalent to
	\begin{equation*}
	\bar{A}_i x+ \bar f_i = z + D_id
	\end{equation*}
	for some $z \in \mathbb{R}^{\sf n}$ and 
	$d \in \mathbb{R}^{\sf d}$ satisfying
	$U_jz \leq v_j$ and $|d|_{\infty} \leq \Delta$.
	Therefore $\bar{A}_i x + \bar f_i \in 
	\cl (\mathcal{X}_j) \oplus \mathcal{B}_{i}$ 
	is equivalent to
	\begin{align*}
	d \leq \Delta {\bf 1},~
	d \geq -\Delta {\bf 1},~\text{and}~
	U_j(\bar{A}_i x + \bar f_i - D_id) \leq v_j
	\end{align*}
	for some $d \in \mathbb{R}^{\sf d}$.
	
	Thus
	we obtain
	the following fact: If $\mathcal{R}_i 
	\cap \bigl(\mathcal{X}_j  \oplus \mathcal{B}_{i}\bigr) 
	\not= \emptyset$, then
	\begin{align}
	\label{eq:BD_NSC}
	\mathcal{X}_i \cap 
	\bigr\{x \in \mathbb{R}^{\sf n}:~
	d \leq \Delta {\bf 1},~
	d \geq -\Delta {\bf 1}, \text{~and~}
	U_j(\bar{A}_i x + \bar f_i - D_id) \leq v_j
	~~\text{for some $d \in \mathbb{R}^{\sf d}$}
	\bigl\} \not= \emptyset.
	\end{align}
	Noticing that $j \in \mathcal{S}$ satisfies \eqref{eq:BD_NSC} if and only if
	$j \in \bar{\mathcal{S}}_i$, 
	we have that $\mathcal{S}_i \subset \bar{\mathcal{S}}_i$.
\end{proof}

The conservatism of Theorem \ref{lem:Set_containing_Si} is due to 
only $\mathcal{X}_j \subset \cl (\mathcal{X}_j)$.
If we allow more conservative results, then we can use the set
$\tilde{\mathcal{S}}_i \supset 
\mathcal{S}_i$ below, which
can be obtained with less computational cost
by removing the disturbance term $d$.
A similar idea is used for the analysis of reachability with bounded disturbance
in \cite{Lin_reachability2011}.
\begin{corollary}
	\label{cor:tilde_Sei}
	{\it
		Let $\bar u^{(l)}_{ji}$ be the sum of the absolute value of the elements in
		$l$-th row
		of $U_j D_i$ and define $\bar v_{ji} := [\bar{v}_{j,i}^{(1)}~\dots~
		\bar{v}_{j,i}^{({\sf n}_{U})}]^{\top}$,
		where ${\sf n}_{U}$ is the number of rows of $U_jD_i$.
		If we define
		$\tilde{\mathcal{S}}_i$ by
		\begin{align}
		\tilde{\mathcal{S}}_i 
		:=
		\bigl\{
		j \in \mathcal{S}:~
		U_i x \leq v_i \text{~~and~~} 
		U_j((A_i+B_iK_i) x + f_i ) \leq v_j + \Delta \bar v_{ji}
		~~~\text{for some~} x \in \mathbb{R}^{\sf n}
		\bigr\},
		\label{eq:tilSej}
		\end{align}
		then $\mathcal{S}_i$ in
		\eqref{eq:Sei_def} satisfies
		$\mathcal{S}_i \subset \tilde{\mathcal{S}}_i$.
	}
\end{corollary}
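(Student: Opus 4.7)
The plan is to leverage Theorem \ref{lem:Set_containing_Si}, which already shows $\mathcal{S}_i \subset \bar{\mathcal{S}}_i$, and then prove the inclusion $\bar{\mathcal{S}}_i \subset \tilde{\mathcal{S}}_i$ by projecting out the disturbance variable $d$ from the description of $\bar{\mathcal{S}}_i$. Chaining the two inclusions yields $\mathcal{S}_i \subset \tilde{\mathcal{S}}_i$, which is what the corollary claims.

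First I would fix an arbitrary $j \in \bar{\mathcal{S}}_i$. By the definition \eqref{eq:bSei} there exist $x \in \mathbb{R}^{\sf n}$ and $d \in \mathbb{R}^{\sf d}$ with $U_i x \leq v_i$, $-\Delta \mathbf{1} \leq d \leq \Delta \mathbf{1}$, and $U_j\bigl((A_i+B_iK_i)x + f_i + B_i g_i - D_i d\bigr) \leq v_j$. Rearranging the last inequality gives $U_j\bigl((A_i+B_iK_i)x + f_i + B_i g_i\bigr) \leq v_j + U_j D_i d$, so my goal reduces to bounding the right-hand perturbation term $U_j D_i d$ componentwise by the fixed vector $\Delta \bar v_{ji}$, uniformly over all $d \in \mathbf{B}_\Delta^\infty$.

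The key step is a row-by-row Hölder-type bound: for the $l$-th row of $U_j D_i$, write its entries as $a_{l1},\dots, a_{l\mathsf d}$ and estimate
\begin{equation*}
(U_j D_i d)^{(l)} \;=\; \sum_{k=1}^{\mathsf d} a_{lk} d^{(k)} \;\leq\; \sum_{k=1}^{\mathsf d} |a_{lk}|\, |d^{(k)}| \;\leq\; \Delta \sum_{k=1}^{\mathsf d} |a_{lk}| \;=\; \Delta \bar v_{ji}^{(l)},
\end{equation*}
by the very definition of $\bar v_{ji}^{(l)}$ as the $\ell_1$-norm of the $l$-th row of $U_j D_i$. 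Stacking these componentwise inequalities gives $U_j D_i d \leq \Delta \bar v_{ji}$, and combining with the rearranged inequality above yields $U_j\bigl((A_i+B_iK_i)x + f_i + B_i g_i\bigr) \leq v_j + \Delta \bar v_{ji}$. Hence the same $x$ certifies that $j \in \tilde{\mathcal{S}}_i$ according to \eqref{eq:tilSej}.

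I do not expect any real obstacle here; the only point worth emphasizing is that, unlike the equivalence exploited in Theorem \ref{lem:Set_containing_Si}, the elimination of $d$ is a one-way implication — we only check that the worst-case linear functional of $d$ fits inside the halfspace, not that a single $d$ simultaneously satisfies all the row constraints with equality. This is precisely why $\tilde{\mathcal{S}}_i$ is potentially a strict superset of $\bar{\mathcal{S}}_i$ and hence a more conservative (but cheaper to compute) outer approximation of $\mathcal{S}_i$. The computational saving comes from the fact that the feasibility LP underlying $\tilde{\mathcal{S}}_i$ involves only the $x$ variables, since the $d$ variables have been replaced by the precomputed bounds $\Delta \bar v_{ji}$.
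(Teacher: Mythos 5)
Your proof is correct and rests on the same key step as the paper's: the row-wise bound $(U_jD_id)^{(l)} \leq \Delta\,\bar v_{ji}^{(l)}$ for $|d|_{\infty}\leq\Delta$, which the paper packages as the set containment \eqref{eq:Xje_containing} and you package as an implication between the feasibility systems defining $\bar{\mathcal{S}}_i$ and $\tilde{\mathcal{S}}_i$. Routing the argument through Theorem \ref{lem:Set_containing_Si} rather than directly through the Minkowski-sum characterization of $\mathcal{S}_i$ is only a cosmetic difference, and your retention of the $B_ig_i$ term is consistent with the paper's own proof (which works with $\bar f_i = f_i + B_ig_i$), the omission of that term in \eqref{eq:tilSej} being an apparent typographical oversight in the statement.
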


\begin{proof}
	It suffices to prove that
	\begin{equation}
	\label{eq:Xje_containing}
	\mathcal{X}_j \oplus \mathcal{B}_{\Delta} \subset
	\{
	x\in \mathbb{R}^{\sf n}:~ U_jx \leq v_j + \Delta \bar v_{ji}
	\}.
	\end{equation}
	Indeed, if \eqref{eq:Xje_containing} holds, then
	$\mathcal{R}_i 
	\cap \bigl(\mathcal{X}_j  \oplus \mathcal{B}_{i}\bigr) 
	\not= \emptyset$ implies
	\begin{equation*}
	\mathcal{X}_i \cap 
	\{x\in \mathbb{R}^{\sf n} :~
	U_j(\bar{A}_i x + \bar f_i ) \leq v_j + \Delta\bar v_{ji}
	\} \not= \emptyset,
	\end{equation*}
	where ${\bar A}_i := A_i +B_iK_i$ and $\bar f_i = f_i +B_ig_i$.
	This leads to $\mathcal{S}_i \subset \tilde{\mathcal{S}}_i$.
	
	Let us study the first element of $U_j(x+D_id)$.
	Let $U_{j}^{(1,l)}$, $(U_jD_i)^{(1,l)}$, and $v_{j}^{(1)}$ 
	be the $(1,l)$-th entry of $U_j$, $U_jD_i$ and the first entry of $v_j$, respectively.
	Also let $x^{(l)}$ and $d^{(l)}$ be
	the $l$-th element of $x$ and $d$, respectively.
	If $x \in \cl (\mathcal{X}_j) $ and $d \in \mathbf{B}_{\Delta}^{\infty}$, then
	the first element $\xi_{ji}^{(1)}$ of $U_j(x+D_id)$ satisfies
	\begin{align}
	\xi_{ji}^{(1)} = \sum_{l=1}^{\sf n} 
	\left(U_j^{(1,l)}x^{(l)} + (U_jD_i)^{(1,l)} d^{(l)}
	\right)
	\leq 
	v_j^{(1)} + \sum_{l=1}^{\sf n} (U_jD_i)^{(1,l)} d^{(l)} 
	\leq
	v_j^{(1)} +  \Delta \bar{v}_{ji}^{(1)}.
	\label{eq:bound_dis_Poly}
	\end{align}
	Since we have the same result for the other elements of $U_j(x+D_id)$,
	it follows that \eqref{eq:Xje_containing} holds.
\end{proof}


\subsubsection{One-step reachable set with unknown $K_i$ and $g_i$}
Let us next investigate the case when $K_i$ and $g_i$ are unknown.

The set $\bar{\mathcal{S}}_i$ given in
Theorem \ref{lem:Set_containing_Si}
works 
for stability analysis
in the presence 
of bounded disturbances, but $\bar{\mathcal{S}}_i$
is dependent on the feedback gain
$K_i$ and the affine term $g_i$.
Hence we cannot use it for their design. 
Here we obtain a set $\mathcal{T}_i \supset \mathcal{S}_i$, 
which does not depend on $K_i$, $g_i$.
Moreover, we derive a sufficient condition on $K_i$, $g_i$
for the state to belong to a given polyhedron in one step. 

Let
$\mathbf{U}$ be the polyhedron defined by
\begin{equation*}
\mathbf{U} := \{u \in \mathbb{R}^{\sf m}:~ Ru \leq r  \},
\end{equation*}  
and we make an additional constraint that 
$u_k \in \mathbf{U}$ for all $k \geq 0$.
Similarly to \cite{Bemporad2000HSCC},
using the information on the input matrices $B_i$ and 
the input bound $\mathbf{U}$, 
we obtain a set
independent of $K_i$, $g_i$ to which the state belong in one step.
\begin{theorem}
	\label{coro:Si_sufficient_with_disturbance}
	{\it
		Assume that for each $i \in \mathcal{S}$,
		$K_i \in \mathbb{R}^{{\sf n} \times {\sf m}}$ and 
		$g_i \in \mathbb{R}^{\sf m}$
		satisfy
		$(A_i+B_iK_i)x + f_i + B_ig_i + D_id \in \mathbf{X}$
		and $K_ix +g_i \in \mathbf{U}$
		if $x \in \mathcal{X}_i$ and $d \in \mathbf{B}_{\Delta}^{\infty}$.
		Let the closure of $\mathcal{X}_i$ be given by \eqref{eq:X_j_rep}.
		Define
		\begin{align}
		\mathcal{T}_i &:=
		\bigl\{
		j \in \mathcal{S}:~
		U_i x \leq v_i,~~Ru \leq r,~~
		d \leq \Delta {\bf 1},~~
		d \geq -\Delta {\bf 1}, \notag \\
		&\qquad
		\text{and~~}
		U_j(A_i x + B_i u + f_i + D_i d) \leq v_j \notag \\
		&\qquad
		\text{for some~~} x \in \mathbb{R}^{\sf n},~
		u \in \mathbb{R}^{\sf m},
		\text{~and~} 
		d \in \mathbb{R}^{\sf d}
		\bigr\}.
		\label{eq:bSi_def}
		\end{align}
		Then 
		we have
		\begin{equation}
		\label{eq:Gi_in_Xs}
		(A_i+B_iK_i)x + f_i+B_ig_i+D_id \in \sum_{j \in \mathcal{T}_i} \mathcal{X}_j
		\end{equation}
		for all $x \in \mathcal{X}_i$
		and $d \in \mathbf{B}_{\Delta}^{\infty}$,
		and hence $\mathcal{S}_i$ in \eqref{eq:Sei_def}
		satisfies $\mathcal{S}_i \subset \mathcal{T}_i$.
	}
\end{theorem}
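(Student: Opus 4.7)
The plan is to establish the set containment \eqref{eq:Gi_in_Xs} directly, from which $\mathcal{S}_i \subset \mathcal{T}_i$ will follow almost immediately. The central idea is that although $\mathcal{T}_i$ is defined through a constraint system on free variables $(x, u, d)$ with no reference to $K_i$ or $g_i$, every trajectory generated by the affine feedback $u = K_i x + g_i$ automatically furnishes a valid witness for that system. Thus $\mathcal{T}_i$ over-approximates the set of modes reachable by the closed loop while remaining independent of the unknown controller parameters, which is precisely what makes it useful for synthesis.

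First I would fix $x \in \mathcal{X}_i$ and $d \in \mathbf{B}_{\Delta}^{\infty}$ and set $y := (A_i + B_iK_i)x + f_i + B_ig_i + D_i d$. The standing hypothesis gives $y \in \mathbf{X}$, and since $\mathbf{X} = \sum_{j=1}^{s} \mathcal{X}_j$ is a disjoint partition, there is a unique $j^\star \in \mathcal{S}$ with $y \in \mathcal{X}_{j^\star}$. I would then verify that the triple $(x, u, d)$ with $u := K_i x + g_i$ certifies $j^\star \in \mathcal{T}_i$ in the sense of \eqref{eq:bSi_def}: the constraint $U_i x \leq v_i$ holds because $x \in \mathcal{X}_i \subset \cl(\mathcal{X}_i)$; the constraint $Ru \leq r$ holds because $K_i x + g_i \in \mathbf{U}$ by hypothesis; the pair of bounds $\pm \Delta \mathbf{1}$ on $d$ follow from $|d|_{\infty} \leq \Delta$; and finally $A_i x + B_i u + f_i + D_i d = y$ satisfies $U_{j^\star} y \leq v_{j^\star}$ because $y \in \mathcal{X}_{j^\star} \subset \cl(\mathcal{X}_{j^\star})$. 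This places $j^\star$ in $\mathcal{T}_i$ and establishes \eqref{eq:Gi_in_Xs}.

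For $\mathcal{S}_i \subset \mathcal{T}_i$, I would pick any $j \in \mathcal{S}_i$ and choose a point $\xi \in \mathcal{X}_j \cap (\mathcal{R}_i \oplus \mathcal{B}_i)$, which exists by \eqref{eq:Sei_def}. Using the definitions of $\mathcal{R}_i$ and $\mathcal{B}_i$ in this section, one can write $\xi = (A_i + B_iK_i) x + f_i + B_ig_i + D_i d$ for some $x \in \mathcal{X}_i$ and some $d$ with $|d|_{\infty} \leq \Delta$. Then the very same witness triple $(x, K_i x + g_i, d)$ used above certifies that $j \in \mathcal{T}_i$; in fact this case is just a restatement of the first part applied to a particular choice of $(x, d)$.

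The only subtlety worth flagging, which is the closest thing to an obstacle, is a bookkeeping matter: the $\mathcal{X}_i$ cannot all be closed since they form a disjoint partition of $\mathbf{X}$, yet the defining inequalities $U_i x \leq v_i$ describe $\cl(\mathcal{X}_i)$ per \eqref{eq:X_j_rep}. This mismatch causes no real difficulty, because $\mathcal{X}_i \subset \cl(\mathcal{X}_i)$ always holds, so the inequality memberships needed at every step come for free. Beyond this, the argument is pure linear-feasibility reasoning and requires no Lyapunov manipulation.
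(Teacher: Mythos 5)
Your proof is correct and follows essentially the same route as the paper's: both arguments hinge on observing that the closed-loop triple $(x,\,K_ix+g_i,\,d)$ is a feasible witness for the constraint system defining $\mathcal{T}_i$, with the containment in $\mathbf{X}$ supplying the mode $j^{\star}$. The only cosmetic difference is that the paper phrases the first part as a proof by contradiction and derives $\mathcal{S}_i\subset\mathcal{T}_i$ via disjointness of the partition, whereas you argue both steps directly; the substance is identical.
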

\begin{proof}
	Define $G_i(x) := (A_i+B_iK_i)x + f_i+B_ig_i$.
	To show \eqref{eq:Gi_in_Xs},
	it suffices to prove that 
	for all $x \in \mathcal{X}_i$ and
	$d \in \mathbf{B}_{\Delta}^{\infty}$, there exists 
	$j \in \mathcal{T}_i$ such that 
	$G_i(x)  + D_i d
	\in \mathcal{X}_j$.
	
	Suppose, on the contrary, that
	there exist $x \in \mathcal{X}_i$ and
	$d \in \mathbf{B}_{\Delta}^{\infty}$
	such that
	$G_i(x) + D_i d \not\in \mathcal{X}_j$ 
	for every $j \in \mathcal{T}_i$.
	Since $G_i(x) +D_id \in \mathbf{X}$, 
	it follows that $G_i(x) +D_id \in \mathcal{X}_{j}$
	for some $j \in \mathcal{S}$. Also, by definition
	\begin{align*}
	&\mathcal{T}_i =
	\bigl\{
	j \in \mathcal{S}:~
	A_ix +B_iu + f_i + D_i d \in \cl(\mathcal{X}_j)
	~~\text{for some
		$x \!\in \! \cl(\mathcal{X}_i)$, $u \!\in \! \mathbf{U}$, 
		and $d \! \in \! \mathbf{B}_{\Delta}^{\infty}$}
	\bigr\}.
	\end{align*}
	Since $x \in \mathcal{X}_i$, 
	$u = K_ix +g_i \in \mathbf{U}$,
	$d \in \mathbf{B}_{\Delta}^{\infty}$, and
	$G_i(x)+D_id \in \mathcal{X}_j$, 
	it follows that $j \in \mathcal{T}_i$. Hence we have 
	$G_i(x)+D_id \in \mathcal{X}_j$ for some $j \in \mathcal{T}_i$.
	Thus we have a contradiction and
	\eqref{eq:Gi_in_Xs} holds
	for every $x \in \mathcal{X}_i$
	and $d \in \mathbf{B}_{\Delta}^{\infty}$.
	
	Let us next prove $\mathcal{S}_i \subset \mathcal{T}_i$.
	Let $j \in \mathcal{S}_i$. By definition,
	there exists $x \in \mathcal{X}_i$ and $d \in \mathbf{B}_{\Delta}^{\infty}$
	such that $G_i(x) + D_id \in \mathcal{X}_j$.
	Also, we see from \eqref{eq:Gi_in_Xs} that
	there exists $\bar j \in \mathcal{T}_i$ such that
	$G_i(x) + D_id \in \mathcal{X}_{\bar j}$.
	Hence we have $\mathcal{X}_{j} \cap \mathcal{X}_{\bar j}  \not= \emptyset$,
	which implies $j = \bar j$. Thus we have $\mathcal{S}_i \subset \mathcal{T}_i$.
\end{proof}

See Remark \ref{rem:Total_ss_cond}
for the assumption that 
$(A_i+B_iK_i)x + f_i +D_id \in \mathbf{X}$
and $K_ix+g_i \in \mathbf{U}$ for all $x \in \mathcal{X}_j$.
\begin{remark}
	{\bf (a)}
	In 
	Theorem \ref{coro:Si_sufficient_with_disturbance}, we have used 
	the counterpart of 
	$\bar{\mathcal{S}}_i$ given in Theorem \ref{lem:Set_containing_Si}, but
	one can easily modify the theorem based on
	$\tilde{\mathcal{S}}_i$ in Corollary \ref{cor:tilde_Sei}.
	
	{\bf (b)}
	If $B_i$ is full row rank, then 
	for all $x \in \cl (\mathcal{X}_i)$, $\eta \in \cl (\mathcal{X}_j)$, 
	and $d \in \mathbf{B}_{\Delta}^{\infty}$, 
	there exists $u \in \mathbb{R}^{\sf m}$
	such that $B_i u = \eta - A_ix - f_i - D_id$.
	In this case, we have the trivial fact: $\mathcal{T}_i = \mathcal{S}$.
\end{remark}

Theorem \ref{coro:Si_sufficient_with_disturbance}
ignores the affine feedback structure $u = K_ix+g_i$ ($x \in \mathcal{X}_i$), 
which makes this theorem conservative.
Since the one-step reachable set depends on the unknown parameters $K_i$
and $g_i$, 
we cannot utilize the feedback structure unless we add some conditions on
$K_i$ and $g_i$.
In the next theorem,
we derive linear programming on $K_i$ and $g_i$ for 
a bounded $\mathcal{X}_i$, 
which is a sufficient condition for 
the one-step reachable set under bounded disturbances
to be contained in a given polyhedron.
\begin{theorem}
	\label{thm:add_cond_Ki}
	{\em
		Let a polyhedron
		$\mathcal{Z} = 
		\{x \in \mathbb{R}^{{\sf n}}:
		\Phi x \leq \phi\}$, and
		let $\mathcal{X}_i$ be a bounded polyhedron.
		Let $\{\xi_{i,1},\dots, \xi_{i,L_i}\}$ and 
		$\{d_{1},\dots, d_{\eta}\}$ be the vertices of
		$\cl (\mathcal{X}_i)$ and 
		$\mathcal{B}_{\Delta}^{\infty}$,
		respectively.
		A matrix $K_i \in \mathbb{R}^{{\sf n} \times {\sf m}}$ 
		and 
		a vector $g_i \in \mathbb{R}^{\sf m}$ satisfy
		$(A_i+B_iK_i)x + f_i+B_ig_i + 
		D_id \in {\mathcal Z}$ for all $x \in \mathcal{X}_i$
		and $d \in \mathcal{B}_{\Delta}$
		if linear programming
		\begin{align}
		\label{eq:LP_cond_Kg}
		\Phi\left(
		(A_i+B_iK_i)\xi_{i,h} + f_i + B_ig_i +D_i d_{\nu}
		\right) \leq \phi
		\end{align}
		is feasible for every $h=1,\dots,L_i$ and for every 
		$\nu=1,\dots,\eta$.
	}
\end{theorem}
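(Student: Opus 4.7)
The plan is to exploit the affine structure of the map $(x,d)\mapsto(A_i+B_iK_i)x+f_i+B_ig_i+D_id$ together with the fact that both $\cl(\mathcal{X}_i)$ and $\mathbf{B}_\Delta^\infty$ are bounded polyhedra, and therefore coincide with the convex hull of their vertex sets. Define
\[
F(x,d):=\Phi\bigl((A_i+B_iK_i)x+f_i+B_ig_i+D_id\bigr).
\]
The goal reduces to showing $F(x,d)\leq\phi$ for all $x\in\mathcal{X}_i$ and $d\in\mathbf{B}_\Delta^\infty$, given that $F(\xi_{i,h},d_\nu)\leq\phi$ for every vertex pair.

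First I would write any $x\in\mathcal{X}_i\subset\cl(\mathcal{X}_i)$ as $x=\sum_{h=1}^{L_i}\alpha_h\xi_{i,h}$ with $\alpha_h\geq0$, $\sum_h\alpha_h=1$, and any $d\in\mathbf{B}_\Delta^\infty$ as $d=\sum_{\nu=1}^{\eta}\beta_\nu d_\nu$ with $\beta_\nu\geq0$, $\sum_\nu\beta_\nu=1$. This step uses only the fact that a bounded polyhedron is the convex hull of its vertices, which holds for $\cl(\mathcal{X}_i)$ by boundedness and for $\mathbf{B}_\Delta^\infty$ since its vertices are exactly the points in $\{-\Delta,\Delta\}^{\sf d}$.

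Next, using the linearity of $\Phi$, $A_i+B_iK_i$, and $D_i$, I would distribute the sums and rewrite the constant offset $\Phi(f_i+B_ig_i)$ as $\sum_{h,\nu}\alpha_h\beta_\nu\,\Phi(f_i+B_ig_i)$ by the identity $\bigl(\sum_h\alpha_h\bigr)\bigl(\sum_\nu\beta_\nu\bigr)=1$. This yields the double convex combination
\[
F(x,d)=\sum_{h=1}^{L_i}\sum_{\nu=1}^{\eta}\alpha_h\beta_\nu\,F(\xi_{i,h},d_\nu).
\]
Since each $F(\xi_{i,h},d_\nu)\leq\phi$ componentwise by the feasibility hypothesis \eqref{eq:LP_cond_Kg}, and componentwise inequalities are preserved by convex combinations with nonnegative weights summing to one, we conclude $F(x,d)\leq\phi$, i.e.\ $(A_i+B_iK_i)x+f_i+B_ig_i+D_id\in\mathcal{Z}$.

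There is no real obstacle here; the argument is essentially vertex enumeration for polytopes. The only subtlety worth being careful about is the bookkeeping that handles the affine (constant) term $f_i+B_ig_i$ correctly through the double convex combination, which is precisely where one invokes $\sum_h\alpha_h=\sum_\nu\beta_\nu=1$. Boundedness of $\mathcal{X}_i$ is essential so that its closure admits a finite vertex decomposition; without it, the LP \eqref{eq:LP_cond_Kg} would not cover all points of $\mathcal{X}_i$.
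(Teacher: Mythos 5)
Your proof is correct and follows essentially the same route as the paper's: reduce membership in the polyhedron $\mathcal{Z}$ to the vertices of $\cl(\mathcal{X}_i)\times\mathbf{B}_{\Delta}^{\infty}$ via convexity and the affine structure of the map. The only difference is that the paper invokes the convex-hull identity $\{G_i(x)+D_id\} = \mathrm{conv}\{G_i(\xi_{i,h})+D_id_{\nu}\}$ as a cited result from set-theoretic methods in control, whereas you verify the needed inclusion directly by the double convex combination, which is a self-contained (and equally valid) way to obtain the same conclusion.
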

\begin{proof}
	Define $G_i(x) := (A_i+B_iK_i)x + f_i+B_ig_i$.
	Relying on the results \cite[Chap.~6]{Blanchini2008} 
	(see also \cite{Barmish1979, Rubagotti2013}), 
	we have
	\begin{equation*}
	\{G_i(x)+D_id:~ x \in \cl(\mathcal{X}_i),~d \in \mathcal{B}_{\Delta}\}
	= 
	\text{conv}\{G_i(\xi_{i,h})+D_id_{\nu},~
	h=1,\dots, L_i,~\nu=1,\dots,\eta \},
	\end{equation*}
	where $\text{conv}(S)$ means the convex hull of a set $S$.
	We therefore obtain $G_i(x)+D_id \in \mathcal{Z}$ for all
	$x \in \cl (\mathcal{X}_i)$ and $d \in \mathcal{B}_{\Delta}$
	if and only if
	$G_i(\xi_{i,h})+D_id_{\nu} \in \mathcal{Z}$, or \eqref{eq:LP_cond_Kg}, 
	holds for every $h=1,\dots,L_i$ and $\nu = 1,\dots, \eta$.
	Thus the desired conclusion is derived.
\end{proof}

\begin{remark}
	{\bf (a)}
	To use Theorem \ref{thm:add_cond_Ki}, we must design
	a polyhedron $\mathcal{Z}$ in advance.
	One design guideline is to take $\mathcal{Z}$ such that
	$\mathcal{Z} \subset \sum_{j \in \bar{\mathcal{T}}_i} \mathcal{X}_j$
	for some $\bar{\mathcal{T}}_i \subset \mathcal{T}_i$, 
	where
	$\mathcal{T}_i$ is defined in Theorem \ref{coro:Si_sufficient_with_disturbance}.
	
	\noindent
	{\bf (b)}
	As in Theorem \ref{lem:Set_containing_Si},
	the conservatism in Theorem \ref{thm:add_cond_Ki}
	arises only from $\mathcal{X}_i \subset \cl (\mathcal{X}_i)$.
	
	\noindent
	{\bf (c)}
	Theorem \ref{thm:add_cond_Ki} gives a trade-off on computational complexity:
	In order to reduce the number of pairs such that  
	\eqref{eq:diff_VjVi} holds, 
	we need to solve the linear programming problem \eqref{eq:LP_cond_Kg}.
	
	\noindent
	{\bf (d)}
	When the state is quantized, then $D_i = B_iK_i$ in \eqref{eq:PWA_disturbance},
	and hence $D_i$ depends on $K_i$ linearly.
	In this case, however, Theorem \ref{thm:add_cond_Ki} can be 
	used for the controller design.
	
\end{remark}

\begin{remark}
	\label{rem:Total_ss_cond}
	Assumptions \ref{ass:input_ver}, \ref{ass:state_ver} and
	Theorem \ref{coro:Si_sufficient_with_disturbance} require
	conditions on $K_i$ and $g_i$ that
	$G_i(x,K_i x)+D_id \in {\mathbf X}$ 
	and $K_ix +g_i \in \mathbf{U}$
	for all $x \in \mathcal{X}_i$ and all $d \in \mathbf{B}_{\Delta}^{\infty}$.
	If ${\mathbf X} = \mathbb{R}^{\sf n}$ and 
	$\mathbf U = \mathbb{R}^{\sf m}$, then
	these conditions always hold.
	If ${\mathbf X}\not= \mathbb{R}^{\sf n}$ but if $\mathcal{X}_i$ is a 
	bounded polyhedron,
	then Theorem \ref{thm:add_cond_Ki} gives 
	linear programming that is sufficient for 
	$G_i(x,K_i x)+D_id \in {\mathbf X}$ to hold.
	Also, Theorem \ref{thm:add_cond_Ki} with 
	$A_i = D_i = 0$, $B_i = I$, and $f_i = 0$
	can be applied to
	$K_ix +g_i \in \mathbf{U}$.
	If $G_i(x,K_i x)+D_id \in {\mathbf X}$ 
	and $K_ix +g_i \in \mathbf{U}$ hold for bounded ${\mathbf X}$
	and $\mathbf{U}$,
	then we can easily set the quantization parameter $M$ in 
	\eqref{eq:quantizer_cond1} 
	to avoid quantizer saturation.
	Similarly, we can use Theorem \ref{thm:add_cond_Ki}
	for constraints on the state and the input.
\end{remark}

By Theorems 
\ref{coro:Si_sufficient_with_disturbance}
and \ref{thm:add_cond_Ki}, we obtain 
linear programing on $K_i$ and $g_i$ for a set containing the
one-step reachable set under bounded disturbances.
However,
in LMI conditions 
of \cite{Cuzzola2002, Lazar2008}
for \eqref{eq:Lyapunov_bound} and
\eqref{eq:diff_VjVi},
$K_i$ is obtained via the variable transformation $K_i = Y_i Q_i^{-1}$,
where $Y_i$ and $Q_i$ are auxiliary variables.
Without variable transformation/elimination,
we obtain only BMI conditions for \eqref{eq:diff_VjVi} to hold
as in Theorem 7.2.2 of \cite{Xu2014}. 
The following theorem also gives BMI conditions on $K_i$
for \eqref{eq:Lyapunov_bound} and \eqref{eq:diff_VjVi} to hold, 
but we can apply
the cone complementary linearization (CCL) algorithm~\cite{Ghaoui1997}
to these BMI conditions:
\begin{theorem}
	\label{thm:PWA_LMI}
	{\em 
		Consider the PWA system \eqref{eq:PWA_disturbance} with
		control affine term $g_i = 0$.
		Let a matrix $E_i$ satisfy $\mathcal{X}_i \subset
		\{
		x \in \mathbb{R}^{\sf n}:~ E_i x \geq 0
		\}.$
		If $f_i = 0$ and $D_i = 0$ and if
		there exist $P_i, Q_i > 0$, $K_i$, and $M_{ij}$ with all elements
		non-negative such that
		\begin{equation}
		\label{eq:aymptotic_LMI}
		\begin{bmatrix}
		P_i - E_i^{\top} M_{ij} E_i & (A_{i}+B_iK_i)^{\top} \\
		* & Q_j
		\end{bmatrix} \succ 0,\qquad
		\begin{bmatrix}
		P_i & I \\
		I & Q_i
		\end{bmatrix} \succeq 0,
		\end{equation}
		and ${\rm trace} (P_iQ_i) = 2{\sf n}$
		hold for all $i\in \mathcal{S}$ and $j \in \mathcal{S}_i$, 
		then there exist $\alpha, \beta, \gamma_i > 0$ such that
		$V(x) := x^{\top}P_i x$ ($x \in \mathcal{X}_i$) satisfies
		\eqref{eq:Lyapunov_bound} and \eqref{eq:diff_VjVi}
		for every $i\in \mathcal{S}$, $j \in \mathcal{S}_i$, and
		$x \in \mathcal{X}_i$.
		
		Furthermore, 
		consider the case $f_i \not= 0$ and $D_i \not= 0$.
		For given $\nu_1, \nu_2 > 0$ with $\nu_1\nu_2 > 1$,
		if there exist
		$P_i, Q_i > 0$, $K_i$, and $M_{ij}$ with all elements
		non-negative such that
		\begin{equation}
		\label{eq:ISS_LMI}
		\begin{bmatrix}
		P_i - E_i^{\top} M_{ij} E_i 
		& -(A_{i}+B_iK_i)^{\top} & -(A_{i}+B_iK_i)^{\top} 
		& (A_{i}+B_iK_i)^{\top} \\
		* & \nu_1 Q_j & -Q_j & 0\\
		* & * & \nu_2 Q_j & 0 \\
		* & * & * & Q_j
		\end{bmatrix} \succ 0,\qquad
		\begin{bmatrix}
		P_i & I \\
		I & Q_i
		\end{bmatrix}
		\succeq 0,
		\end{equation}
		and ${\rm trace} (P_iQ_i) = 2{\sf n}$
		hold for all $i\in \mathcal{S}$ and $j \in \mathcal{S}_i$, 
		then there exist $\alpha, \beta, \gamma, \rho > 0$ such that
		$V(x) := x^{\top}P_i x$ ($x \in \mathcal{X}_i$) satisfies
		\eqref{eq:Lyapunov_bound} and \eqref{eq:diff_VjVi_ISS}
		for every $i\in \mathcal{S}$, $j \in \mathcal{S}_i$, and
		$x \in \mathcal{X}_i$.
	}
\end{theorem}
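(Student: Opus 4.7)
The idea is to read the matrix inequalities in \eqref{eq:aymptotic_LMI} and \eqref{eq:ISS_LMI} as disguised Lyapunov-decrease conditions for $V(x) := x^\top P_i x$ on $\mathcal{X}_i$, with the auxiliary $Q_i$ and the trace identity serving only to pin $Q_i$ to $P_i^{-1}$. First, a Schur complement of $\begin{bmatrix} P_i & I \\ I & Q_i \end{bmatrix} \succeq 0$ on its top-left $P_i \succ 0$ block gives $Q_i \succeq P_i^{-1}$. Conjugating by $P_i^{1/2}$ and taking traces yields ${\rm trace}(P_iQ_i) \geq {\sf n}$, with equality if and only if $Q_i = P_i^{-1}$; the CCL trace identity of the theorem therefore forces $Q_i = P_i^{-1}$ for every $i$, and I may henceforth substitute $Q_j = P_j^{-1}$ in the other LMI blocks.

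For the undisturbed case, a Schur complement of \eqref{eq:aymptotic_LMI} on its $(2,2)$ block $Q_j$ transforms it into
\begin{equation*}
P_i - E_i^\top M_{ij} E_i - (A_i+B_iK_i)^\top P_j (A_i+B_iK_i) \succ 0.
\end{equation*}
Since $\mathcal{X}_i \subset \{x : E_ix \geq 0\}$ componentwise and $M_{ij}$ has nonnegative entries, the S-procedure slack $x^\top E_i^\top M_{ij} E_i x = (E_ix)^\top M_{ij}(E_ix)$ is nonnegative on $\mathcal{X}_i$. Consequently, for $x \in \mathcal{X}_i$,
\begin{equation*}
x^\top P_i x - x^\top (A_i+B_iK_i)^\top P_j (A_i+B_iK_i) x \geq \gamma_{ij}|x|^2
\end{equation*}
for some $\gamma_{ij} > 0$ bounding the smallest eigenvalue of the strict LMI residual. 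Setting $\gamma_i := \min_{j \in \mathcal{S}_i}\gamma_{ij}$ and using $f_i = g_i = 0$, $D_i = 0$ to identify $V_j((A_i+B_iK_i)x) = x^\top (A_i+B_iK_i)^\top P_j (A_i+B_iK_i) x$ gives \eqref{eq:diff_VjVi}; the bounds \eqref{eq:Lyapunov_bound} follow with $\alpha := \min_i \lambda_{\min}(P_i)$ and $\beta := \max_i \lambda_{\max}(P_i)$.

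For the disturbance case I perform the analogous reduction on the $4\times4$ LMI \eqref{eq:ISS_LMI}. The key observation is that the central sub-block $\begin{bmatrix} \nu_1 Q_j & -Q_j \\ -Q_j & \nu_2 Q_j \end{bmatrix}$ is positive definite exactly when $\nu_1 \nu_2 > 1$, which is the hypothesis. With $Q_j = P_j^{-1}$, Schur-complementing successively out the three bottom-right blocks collapses the LMI to
\begin{equation*}
P_i - E_i^\top M_{ij} E_i - \eta \,(A_i+B_iK_i)^\top P_j (A_i+B_iK_i) \succ 0, \qquad \eta := \frac{(\nu_1+1)(\nu_2+1)}{\nu_1 \nu_2 - 1} > 1.
\end{equation*}
A Young inequality with parameter $\eta - 1$ gives $V_j(\bar A_i x + f_i + D_id) \leq \eta V_j(\bar A_i x) + \tfrac{\eta}{\eta-1}(f_i + D_id)^\top P_j(f_i + D_id)$ where $\bar A_i := A_i+B_iK_i$. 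Combining this with the displayed inequality and the S-procedure step on $\mathcal{X}_i$ exactly as in the undisturbed case produces constants $\gamma, \rho > 0$ (depending on $\nu_1, \nu_2, \|P_j\|, |f_i|, \|D_i\|, {\sf d}$) for which \eqref{eq:diff_VjVi_ISS} holds uniformly in $i \in \mathcal{S}$, $j \in \mathcal{S}_i$, $x \in \mathcal{X}_i$, and $d \in \mathbf{B}_\Delta^\infty$. The main obstacle is the algebraic bookkeeping of the iterated Schur reductions leading to the explicit factor $\eta$ and verifying that the Young coefficient matches it, so that the LMI is \emph{exactly} what is required; once that block-matrix identity is in hand, the rest reduces to the same S-procedure plus eigenvalue estimate used in the undisturbed case.
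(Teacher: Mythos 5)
Your proof is correct in substance and, for the first half, follows the same route as the paper: use the coupling LMI plus the trace condition to force $Q_i = P_i^{-1}$, Schur-complement \eqref{eq:aymptotic_LMI} into $P_i - E_i^{\top}M_{ij}E_i - \bar A_i^{\top}P_j\bar A_i \succ 0$, and invoke the S-procedure term $(E_ix)^{\top}M_{ij}(E_ix)\geq 0$ on $\mathcal{X}_i$ to extract $\gamma_{ij}$. Where you genuinely diverge is the disturbance case. The paper does not collapse \eqref{eq:ISS_LMI} from scratch: it quotes Theorem 3.1 of the Lazar--Heemels reference, which asserts that \eqref{eq:diff_VjVi_ISS} follows from a certain $3\times 3$ matrix inequality in $P_j$, and then shows \eqref{eq:ISS_LMI} is that inequality after a congruence by $\mathrm{diag}(I,P_j^{-1},P_j^{-1})$ and one Schur complement. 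You instead give a self-contained derivation: Schur-complementing the bottom-right $3\times 3$ block of \eqref{eq:ISS_LMI} (whose invertibility is exactly $\nu_1\nu_2>1$) yields $P_i - E_i^{\top}M_{ij}E_i - \eta\,\bar A_i^{\top}P_j\bar A_i\succ 0$ with $\eta = 1+\tfrac{\nu_1+\nu_2+2}{\nu_1\nu_2-1}=\tfrac{(\nu_1+1)(\nu_2+1)}{\nu_1\nu_2-1}>1$ --- I checked this block computation and it is right --- and then a Young inequality with parameter $\eta-1$ absorbs $f_i+D_id$ into the $\rho\Delta^2$ term. This buys transparency (no external citation, and an explicit formula for the gain $\eta$ that explains the role of $\nu_1\nu_2>1$) at the cost of the bookkeeping you acknowledge; it also makes visible that the $f_i^{\top}P_jf_i$ contribution is a $\Delta$-independent constant that can only be hidden inside $\rho\Delta^2$ because $\Delta>0$ is fixed. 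One small inconsistency worth fixing: you correctly derive $\mathrm{trace}(P_iQ_i)\geq{\sf n}$ with equality iff $Q_i=P_i^{-1}$, but then assert that the theorem's constraint $\mathrm{trace}(P_iQ_i)=2{\sf n}$ forces $Q_i=P_i^{-1}$; as written these two statements contradict each other. The factor $2$ is an artifact of the theorem statement (the paper's own proof asserts the minimum is $2{\sf n}$, whereas the standard CCL bound for $\mathrm{trace}(P_iQ_i)$ is ${\sf n}$), and both proofs need the genuine equality $Q_j=P_j^{-1}$ --- the inequality $Q_j\succeq P_j^{-1}$ alone points the wrong way for the Schur-complement step --- so you should either state the trace condition consistently with your bound or note the discrepancy explicitly.
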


\begin{proof}
	Since the positive definiteness of $P_i$ 
	implies \eqref{eq:Lyapunov_bound}, 
	it is enough to show that
	\eqref{eq:aymptotic_LMI} and 
	\eqref{eq:ISS_LMI} lead to
	\eqref{eq:diff_VjVi} and \eqref{eq:diff_VjVi_ISS},
	respectively.
	
	For $P_i,Q_i > 0$ satisfying the second LMI in
	\eqref{eq:aymptotic_LMI} and \eqref{eq:ISS_LMI},
	we have $\text{trace} (P_iQ_i) \geq 2 {\sf n}$.
	Furthermore, $\text{trace} (P_iQ_i) = 2 {\sf n}$
	if and only if $P_iQ_i = I$.
	Define $\bar A_i = A_{i}+B_iK_i$.
	
	Applying the Schur complement formula to 
	the LMI condition in \eqref{eq:aymptotic_LMI},
	we have
	\begin{equation*}
	P_i - \bar A_i^{\top} P_j \bar A_i
	- E_i^{\top} M_{ij} E_i \succ 0.
	\end{equation*}
	Since $E_i x \geq 0$, there exists $\gamma_i > 0$ such that
	$V_i(x) - V_j(\bar A_i x) > \gamma_i |x|^2$ for every 
	$x \in \mathcal{X}_i$. Hence we obtain \eqref{eq:diff_VjVi}.
	
	As regards \eqref{eq:ISS_LMI}, 
	it follows from Theorem 3.1 in \cite{Lazar2008} that 
	\eqref{eq:diff_VjVi_ISS} holds for some $\gamma, \rho > 0$
	if
	\begin{equation*}
	\begin{bmatrix}
	P_i - \bar A_i^{\top}P_j\bar A_i- E_i^{\top} M_{ij} E_i 
	& -\bar A_i^{\top}P_j & -\bar A_i^{\top} P_j　\\
	* & \nu_1 P_j & -P_j\\
	* & * & \nu_2 P_j 
	\end{bmatrix}　\succ 0.
	\end{equation*}
	Pre- and post-multiplying $\text{diag}(I, P_i^{-1}, P_{i}^{-1})$
	and using the Schur complement formula, we obtain 
	the first LMI in \eqref{eq:ISS_LMI}.
\end{proof}


Since $\min (\text{trace} (P_iQ_i)) = 2 {\sf n}$,
the conditions in Theorem \ref{thm:PWA_LMI} are feasible if
the problem of minimizing $\text{trace}
\left(
\sum_{i=1}^{s} P_iQ_i
\right)$
under \eqref{eq:aymptotic_LMI}/\eqref{eq:ISS_LMI}
has a solution $2ns$.
In addition to LMIs \eqref{eq:aymptotic_LMI} and \eqref{eq:ISS_LMI},
we can consider linear programming \eqref{eq:LP_cond_Kg}
for the constraint on the one-step reachable set.
The CCL algorithm solves this constrained minimization problem. 
The CCL algorithm may not 
find the global optimal solution, but, in general,
we can solve
the minimization problem 
in a more computationally efficient way than 
the original 
non-convex feasibility problem~\cite{Oliveria1997}.

\section{Numerical Example}
Consider a PWA system in \eqref{eq:state_saturation_stateEq} with
quantized state feedback, where
\begin{gather*}
A_1 = A_3 = 
\begin{bmatrix}
0.5 & -0.4 \\
0 & 2
\end{bmatrix},\quad
A_2 = A_4  = 
\begin{bmatrix}
2 & 0 \\
-1 & 1
\end{bmatrix},\quad
A_5 = A_6  = 
\begin{bmatrix}
0.5 & -0.1 \\
1 & 2
\end{bmatrix} \\
B_1 = B_3 = B_5 = B_6 = 
\begin{bmatrix}
0 \\ 1
\end{bmatrix},\quad
B2 = B4 = 
\begin{bmatrix}
-1 \\ 0.5
\end{bmatrix},\quad
f_1=f_2=f_3=f_4=f_5=f_6 = 0.
\end{gather*}
The matrix $U_i$ and the vector $v_i$ in \eqref{eq:X_j_rep} 
characterizing
the region $\mathcal{X}_i$ are given by
\begin{gather*}
U_1 = -U_3 = 
\begin{bmatrix}
1 & -1 \\
1 & 1 \\
-1 & 0 \\
1 & 0
\end{bmatrix},\quad
U_2 = -U_4 = 
\begin{bmatrix}
1 & -1 \\
-1 & -1 \\
0 & 1
\end{bmatrix},\quad
U_5 = -U_6 = 
\begin{bmatrix}
1 & -1 \\
1 & 1 \\
-1 & 0
\end{bmatrix} \\
v_1 = v_3 = 
\begin{bmatrix}
0 \\
0 \\
1 \\
-0.3  
\end{bmatrix},\quad
v_2 = v_4 = 
\begin{bmatrix}
0 \\
0 \\
1
\end{bmatrix},\quad
v_5 = v_6 = 
\begin{bmatrix}
0 \\
0 \\
0.3
\end{bmatrix}.
\end{gather*}
Let $\mathbf{X} = \sum_{i=1}^6 \mathcal{X}_i = 
\{x \in \mathbb{R}^{2}:~ 
|x|_{\infty} \leq 1
\},
$
and let us use 
a uniform-type quantizer whose
parameters in \eqref{eq:quantizer_cond1} 
are $M = 1.5$ and $\Delta = 0.01$ .
By using Theorems \ref{thm:add_cond_Ki} and \ref{thm:PWA_LMI},
we designed feedback gains $K_i$ such that the Lyapunov function
$V(x) := x^{\top}P_i x$ ($x \in \mathcal{X}_i$) satisfies
\eqref{eq:Lyapunov_bound} and \eqref{eq:diff_VjVi}
for every $i\in \mathcal{S}$, $j \in \mathcal{S}_i$, and
$x \in \mathcal{X}_i$, and the following constraint conditions hold
\begin{align}
&x_{k+1} \in \mathcal{X}_2 \text{~for all~} x_{k} \in \mathcal{X}_1, 
\label{eq:X1_X2} \\
&x_{k+1} \in \mathcal{X}_4 \text{~for all~} x_{k} \in \mathcal{X}_3, \text{and}
\label{eq:X3_X4} \\
&x_{k+1} \in \mathbf{X} \text{~for all~} x_{k} \in \mathbf{X}.
\label{eq:Total_space}
\end{align}
The resulting $K_i$ were given by
\begin{gather*}
K_1 = K_3 = 
\begin{bmatrix}
-0.6140 &-1.6368
\end{bmatrix},\qquad
K_2 = K_4 = 
\begin{bmatrix}
1.9995 &-0.5244
\end{bmatrix} \\
K_5 = K_6 = 
\begin{bmatrix}
-0.9980 &-1.9967
\end{bmatrix},
\end{gather*}
and we obtained the decease rate $\Omega = 0.7725$ in
\eqref{eq:Omega_def_state_case}
of the ``zoom'' parameter $\mu$
with $\varepsilon_{ij} = 0.01$ and $\delta_{ij} = 0.49$.

Fig.~\ref{fig:state_trajectory_Example} shows
the state trajectories with initial states on the boundaries $x_1=1$ and
$x_2 = 1$.
We observe that all trajectories converges to the origin and that
the constraint conditions
\eqref{eq:X3_X4} and \eqref{eq:Total_space}
are satisfied in the presence of quantization errors.

\begin{figure}[t]
	\centering
	\includegraphics[width = 8cm,clip]{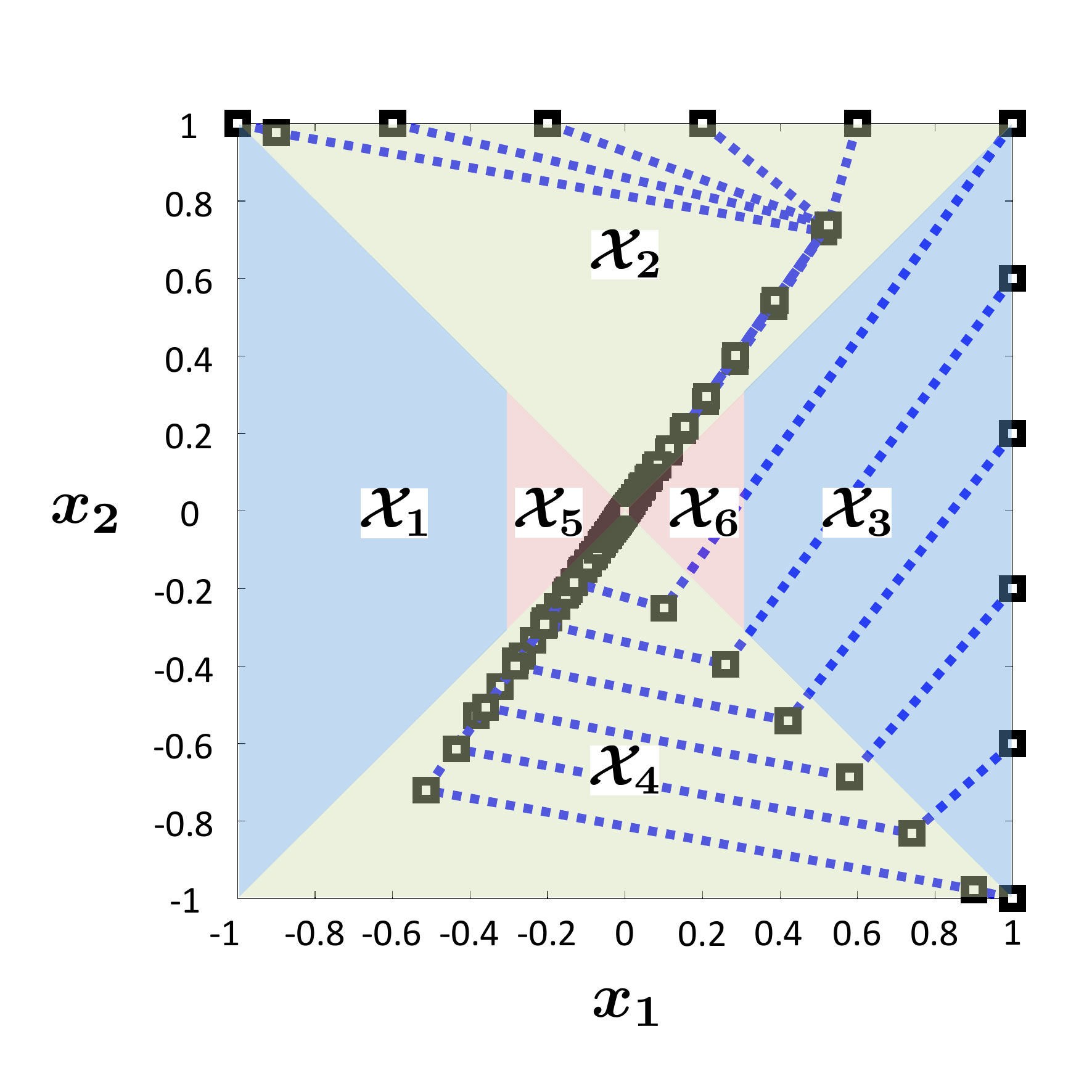}
	\caption{Simulation result}
	\label{fig:state_trajectory_Example}
\end{figure}

\section{Conclusion}
We have provided an encoding strategy for 
the stabilization of PWA systems with quantized signals.
For the stability of the closed-loop system, 
we have shown that 
the piecewise quadratic Lyapunov function
decreases in the presence of quantization errors.
For the design of quantized feedback controllers,
we have also studied the stabilization problem of
PWA systems with bounded disturbances.
In order to reduce the conservatism and the computational cost
of controller designs,
we have investigated the one-step reachable set.

\appendix
\renewcommand{\thetheorem}{\Alph{theorem}}
Here we give the proof of the following proposition
for completeness:
\begin{proposition}
	{\em 
		Let $\mathcal{A}
		\subset \mathbb{R}^{\sf n}$ be a bounded and closed polyhedron, and
		let $v_1,\dots,v_{\ell}$ be the vertices of $\mathcal{A}$.
		For every $\xi \in \mathbb{R}^{\sf n}$, 
		we have
		\begin{equation*}
		\max_{x \in \mathcal{A}}
		|\xi - x|_{\infty} = 
		\max_{x \in \{v_1,\dots,v_{\ell}\}}
		|\xi - x|_{\infty}.
		\end{equation*}
	}
\end{proposition}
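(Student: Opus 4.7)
The plan is to exploit the convexity of the sup-norm together with the fact that a bounded closed polyhedron equals the convex hull of its vertices.

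First I would reduce the problem to a statement about convex functions on a polytope. Since $\mathcal{A}$ is bounded and closed, it is a polytope, and by the Minkowski--Weyl theorem every $x \in \mathcal{A}$ admits a representation
\begin{equation*}
x = \sum_{i=1}^{\ell} \lambda_i v_i, \qquad \lambda_i \geq 0,\quad \sum_{i=1}^{\ell} \lambda_i = 1.
\end{equation*}
Next I would observe that the map $f(x) := |\xi - x|_{\infty}$ is convex on $\mathbb{R}^{\sf n}$, being the composition of the norm $|\cdot|_{\infty}$ with the affine map $x \mapsto \xi - x$.

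The key step is then Jensen's inequality applied to $f$:
\begin{equation*}
f(x) = f\!\left( \sum_{i=1}^{\ell} \lambda_i v_i \right) \leq \sum_{i=1}^{\ell} \lambda_i f(v_i) \leq \max_{i=1,\dots,\ell} f(v_i).
\end{equation*}
Taking the supremum over $x \in \mathcal{A}$ on the left yields $\max_{x \in \mathcal{A}} f(x) \leq \max_{i} f(v_i)$. The reverse inequality is immediate, since $v_i \in \mathcal{A}$ for each $i$, so $\max_{i} f(v_i) \leq \max_{x \in \mathcal{A}} f(x)$. Combining the two inequalities gives the desired equality.

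I do not anticipate any real obstacle: this is a routine application of the fact that a convex function attains its maximum over a polytope at one of its vertices. The only care needed is the appeal to the Minkowski--Weyl representation of a bounded closed polyhedron, which justifies writing every point of $\mathcal{A}$ as a convex combination of the $v_i$. Once that representation is invoked, convexity of $|\cdot|_{\infty}$ does all the work, and no further estimates are required.
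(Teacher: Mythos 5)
Your proof is correct and follows essentially the same route as the paper: both write an arbitrary $x \in \mathcal{A}$ as a convex combination of the vertices and then exploit convexity to bound $|\xi - x|_{\infty}$ by $\max_p |\xi - v_p|_{\infty}$. The only cosmetic difference is that the paper carries out the convexity estimate coordinatewise via the triangle inequality on each entry $|\xi^{(n)} - x^{(n)}|$ before taking the maximum over coordinates, whereas you invoke convexity of the $\infty$-norm directly through Jensen's inequality.
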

\vspace{6pt}
\begin{proof}
	Choose $x \in \mathcal{A}$ arbitrarily, and let 
	\begin{align*}
	x = \sum_{p=1}^{\ell} a_pv_p,\qquad a_p \geq 0,
	\quad \sum_{p=1}^{\ell} a_p = 1.
	\end{align*}
	Let $n$-th entry of $\xi$, $x$, and $v_p$ be $\xi^{(n)}$, 
	$x^{(n)}$, and $v^{(n)}_p$, respectively.
	For every $n$, we have
	\begin{align*}
	|\xi^{(n)} \!-\! x^{(n)}| 
	\leq
	\sum_{p=1}^{\ell} a_p
	\left|
	\xi^{(n)} \!-\!  v^{(n)}_p
	\right| \! \leq \!
	\max_{n} \max_{p}\left|
	\xi^{(n)} \!-\!  v^{(n)}_p
	\right|. 
	\end{align*}
	Hence 
	$
	|\xi - x|_{\infty} 
	\leq
	\max_{n} \max_{p}|
	\xi^{(n)} -  v^{(n)}_p
	| 
	=
	\max_{p}\left|
	\xi -  v_p
	\right|_{\infty}.
	$
	This completes the proof.
\end{proof}

\section*{Acknowledgment}
The first author would like to thank 
Dr. K. Okano of University California, Santa Barbara 
for helpful discussions on quantized control for PWA systems.
The authors are also grateful to anonymous reviewers whose comments 
greatly improved this paper.



\end{document}